\newtheorem{Th}{Theorem}
\newtheorem{Rem}{Remark}
\newtheorem{Cor}{Corollary}
\newcommand{\T}{^\prime}
\newcommand{\1}{\boldsymbol{1}}
\newcommand{\w}{\boldsymbol{\omega}}
\newcommand{\bmu}{\boldsymbol{\mu}}
\newcommand{\bSigma}{\boldsymbol{\Sigma}}
\newcommand{\bR}{\boldsymbol{R}}
\newcommand{\br}{\boldsymbol{r}}
\newcommand{\inv}{^{-1}}
\newcommand{\D}{\mathcal{D}}
\newcommand{\Rf}{R_f}
\newcommand{\rf}{r_f}
\begin{document}
	\title{An approximate solution for the power utility optimization under predictable returns}
	\date{}
	\author[]{{\small Dmytro Ivasiuk}}
	\affil[]{{\footnotesize Department of Statistics, European University Viadrina, Frankfurt(Oder), Germany}}
	\affil[]{\footnotesize E-mail: ivasiuk@europa-uni.de}
	\maketitle

	\begin{abstract}		
This work derives an approximate analytical single period solution of the portfolio choice problem for the power utility function. It is possible to do so if we consider that the asset returns follow a multivariate normal distribution. 
It is shown in the literature that the log-normal distribution seems to be a good proxy of the normal distribution in case if the standard deviation of the last one is way smaller than its mean. So we can use this property because this happens to be true for gross portfolio returns. In addition, we present a different solution method that relies on the machine learning algorithm called Gradient Descent. It is a powerful tool to solve a wide range of problems, and it was possible to implement this approach to portfolio selection.
Besides, the paper provides a simulation study, where we compare the derived results with the well-known solution, which uses a Taylor series expansion of the utility function.
	\end{abstract}
	
	\textbf{Keywords:} CRRA, power utility, utility optimization, log-normal distribution, gradient descent.
	
	\section{Introduction}

The current paper discusses portfolio optimization for the power utility investor. The classical problem is to derive the portfolio weights which maximize the expected value for the utility function of wealth \cite[see][]{pennacchi2008theory,brandt2009portfolio}. There are different types of utilities, and each one requires a specific approach for the solution. For instance, one can obtain an exact formula for the quadratic function by estimating the mean and variance of asset returns \cite[see][]{bodnar2015closed}. However, it is hard to find an analytical solution for other utilities because of the difficulty to derive the expected value. For that reason, specific distributional assumptions on asset returns become handy. For example, the normal distribution works well for the exponential utility and the log-normal distribution is helpful for the power utility function \cite[see][]{bodnar2015exact,campbell2002strategic,pennacchi2008theory,bodnar2020mean}. On the other hand, one can apply numerical and approximative results. \cite[see][]{brandt2005simulation,broadie2017numerical}. 

This work focuses on the power utility function, also called constant relative risk aversion (CRRA) utility.  Following John W. Pratt, the CRRA investor is a person whose investment decisions do not rely on initial wealth \cite[see][]{campbell2002strategic,pennacchi2008theory}. In addition, those who follow CRRA preferences would gladly pay a part of their wealth to escape risk. This utility is a popular choice in the literature, and up to now, there are several approaches to find its optimal portfolio. In some papers, authors consider a continuous-time setup with Markov chains for market and Brownian motion for stock prices \cite[see][]{ccanakouglu2010portfolio,ccanakouglu2012hara}. But for discrete-time approximate results are more often because of how the function is defined \cite[see][]{brandt2009portfolio,campbell2002strategic,brandt2005simulation,bodnar2020mean}. 

To derive the optimal weights, we consider that asset returns follow a multivariate normal distribution. Consequently, the resulting portfolio returns will follow a normal distribution as a linear combination of asset returns. However, it gives no benefits in the case of the power utility function. It is convenient to have the log-normal distribution for the gross portfolio returns, to calculate the expected value of a power function of wealth. For that reason, the log-normal distribution comes to mind as an approximation of the normal distribution.
Thus, the paper provides approximate single-period optimal portfolio weights for the power utility function, considering that the asset returns follow a multivariate normal distribution. The approximation is based on a similar behaviour of density functions for the normal  $N(\mu,\sigma^2)$ and log-normal $logN(\ln\mu,\sigma^2/\mu^2)$ distributions if $\sigma/\mu$ approaches zero \cite[see][]{bodnar2020mean}. For instance, if $\mu=1$ and $\sigma=0.04$ then both densities pretty much coincide (see Figure \ref{fig: n ln n}). Indeed, testing simulated samples from the corresponding log-normal distribution would frequently fail to reject the normality hypothesis. Performing Shapiro-Wilk test for $10^5$ samples of 2500 values simulated from the log-normal distribution $logN(0,0.04^2)$ has a $p$-value greater than 0.05 in around 43\% cases. A similar situation holds at a stock market trading: the gross returns fluctuate around value zero within a small interval. Besides, the "$6\sigma$" bound of 24\% would cower 99.9\% of all portfolio returns.  It is also notable that the correct definition of the approximation requires a positive definite $\mu$. That is itself a fair assumption since nobody expects to lose all the invested money. One might also question the normality setup because it implies a possibility of negative wealth. Yet from a practical point of view it is almost impossible ($x\sim N(1,0.1) \Rightarrow  P(x<0)=7.62\cdot10^{-24}$). On the other hand, this issue fully vanishes under the log-normality approximation (because it is the positive definite random variable). But the main reason to use log-normal distribution is that one has an analytical formula to calculate any arithmetic moment, including real-valued.
	
	\begin{figure}[]
		\includegraphics[width=\textwidth]{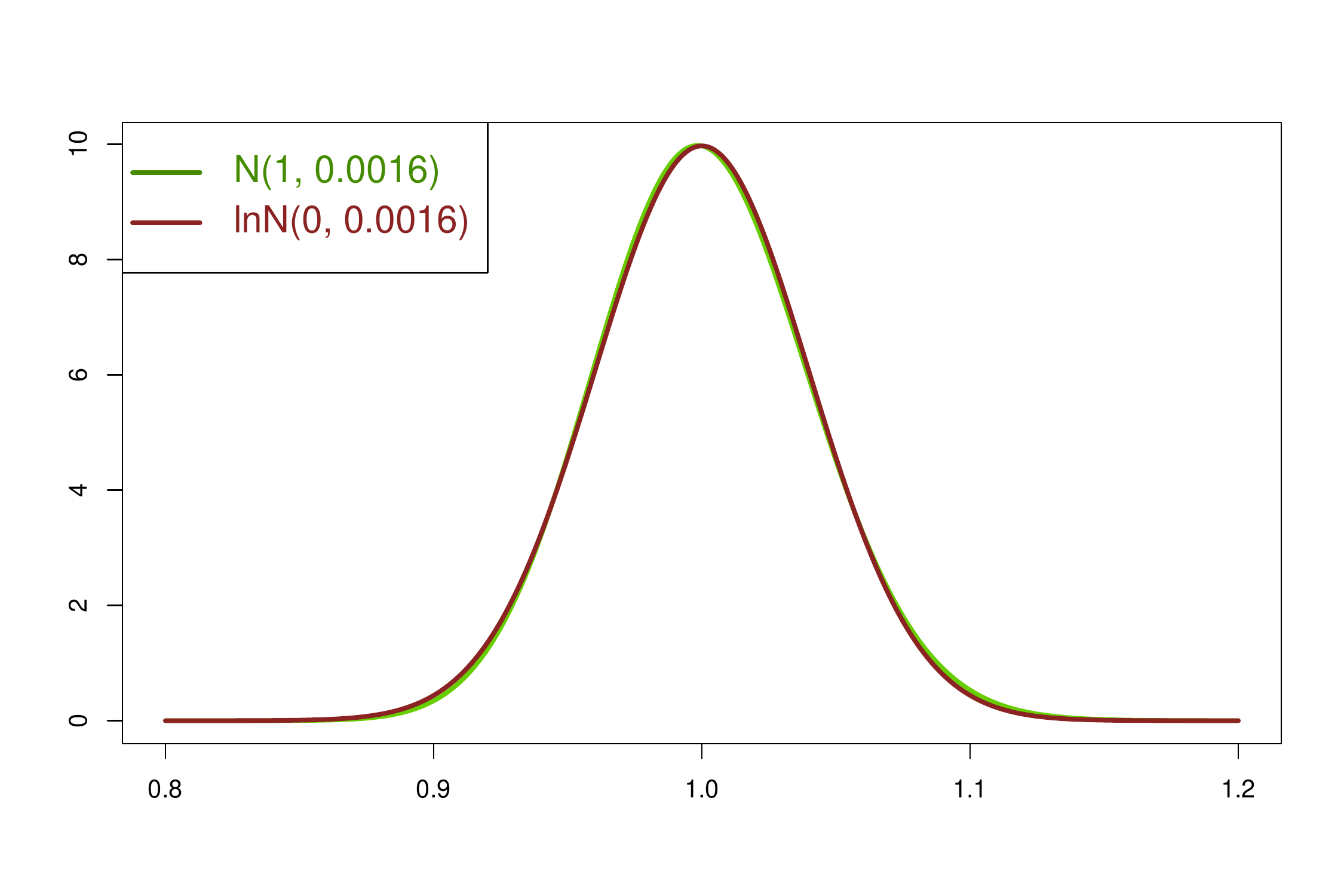}
		\caption{Normal density in case of $\mu_{N}=1$ and $\sigma_{N}=0.04$ and Log-normal density in case of $\mu_{LN}=0$ and $\sigma_{LN}=0.04$}
		\label{fig: n ln n}
	\end{figure}

One should also clarify that the approach presented in this paper is different from \cite{bodnar2020mean}. First of all, we include the risk-free asset into the portfolio. Second, the present work assumes the multivariate normal distribution for the asset returns, which is also used in the literature to explain weekly or monthly returns \cite[see][]{andersen2000exchange}. Whereas in \cite{bodnar2020mean} authors consider the log-normal distribution for gross portfolio returns.

Alongside the analytical solution, we develop the different one which uses a famous machine learning algorithm called Gradient Descend \cite[see][]{friedman2001elements,goodfellow-et-al-2016,geron2019hands}. Generally speaking, this algorithm helps us find the extrema points without solving the first-order conditions, $f^\prime(x)=0$. Besides, there is no need for distributional assumptions on returns because the Gradient Descent "learns" right from the historical data. We also show that this method is applicable in the case of other utilities that are concave.

As an empirical study, we compare the derived results to the benchmark numerical solution from \cite{brandt2005simulation}. To simulate the hypothetical returns, we use a multivariate normal distribution with the mean vector and the variance-covariance matrix estimated from the historical prices of several stocks. It appears that in comparison to the benchmark portfolio, our developed strategies provide the same results, sometimes even being better by a small margin.

	\section{Framework}
	
Below one can find a framework used to present the single-period portfolio selection problem for a power utility function. Let $\boldsymbol{r}=(r_1, r_2,\dots,r_k)\T$ denote the one-period random return vector on $k$ risky assets. Let $\rf$ be the known return (close to zero) on the risk-free asset and $\1$ be the $k$-dimensional vector of ones. Let $\w$ define portfolio weights for a part of an initial wealth $W_0$ allocated between risky assets. We will also consider that the rest of the wealth, $1-\w\1$, stays at a risk-free asset.  Therefore, if $R_{f}=1+r_{f}$ is the gross return on the risk-free asset and $\bR=\br-r_{f}\1$ is the vector of excess return on the risky assets, then the investor's wealth at the end of investment period is $W=W_{0}\left(R_{f}+\w\T\bR\right)$. So it is considered that an investor distributes all the money between risky assets and one risk-free asset without any consumption. Also let us define the mean vector and the variance-covariance matrix of the excess returns with the next way: ${\boldsymbol{\mu}}=E[\br-\rf\1]$, $\boldsymbol{\Sigma}=Var[\br-\rf\1]$.
	
	The investor's problem, analysed in the paper, is to derive the optimal portfolio weights in order to maximise the expected utility of the final wealth:
	\begin{equation}\label{eq:investor's problem}
		\w^*=\underset{\w}{\arg \max} E\left[U\left(W\right)\right],
	\end{equation}
	subject to the budget constraint: $W=W_{0}\left(R_{f}+\w\T\bR\right)$, given the positive initial and final wealth, $W_0$ and $W$ respectively. 

	Precisely speaking, this is a single-period setup of the dynamic utility maximization problem \cite[see,][]{brandt2006dynamic,pennacchi2008theory,bodnar2015closed},
	and the utility function $U(\cdot)$ in our case will be a power function of wealth
	\begin{equation}\label{eq: power utility}
		U(W)=\frac{W^{1-\gamma}}{1-\gamma},\quad\gamma>0, \gamma \neq 1 .
	\end{equation}
	The last one is a constant relative risk aversion (CRRA) utility and belongs to HARA family \cite[see][]{ccanakouglu2010portfolio,ccanakouglu2012hara}.
	
	To model the behaviour of excess returns, we assume the multivariate normal distribution, $N(\bmu,\bSigma)$. Hence, the resulting gross portfolio returns follow the univariate normal distribution, as the linear combination of normal variables, i.e. $$N(\w\T\bmu+\Rf,\w\T\bSigma\w).$$
	 As it was mentioned in the {Introduction} section, we approximate the normal distribution with the log-normal $$logN(\ln(\w\T\bmu+\Rf),\frac{\w\T\bSigma\w^2}{(\w\T\bmu+\Rf)^2})$$ 
	 \cite[see][]{bodnar2020mean}. Figure \ref{fig: n ln n} clearly demonstrates the similarity of both densities. Thus, the approximation should work well because of the small variance of a portfolio.

	Using the approximation, we receive an opportunity to calculate the expected value of a power function of a log-normal random variable:
	\begin{equation}\label{eq: moments of log-normal distribution}
		E\left[X^\lambda\right]=\exp\left(\alpha \lambda+\frac{1}{2}\beta^2\lambda^2\right),\quad \text{if}\ X\sim logN(\alpha,\beta^2),\quad \lambda\in\mathbb{R} .
	\end{equation}
	And base on this property, it was possible to drive an analytical solution for \eqref{eq:investor's problem}.
	
	\section{Approximate solution of the power utility optimization}
	
	The main finding of this paper presents an approximate solution for a single-period portfolio optimization problem \eqref{eq:investor's problem}. We also consider an investor with a power utility of wealth function \eqref{eq: power utility} and the excess returns to follow the multivariate normal distribution.
	
	\begin{Th}\label{th: optimal portfolio weights theorem}
		Assume that the excess returns $\bR$ follow the multivariate normal distribution with the covariance matrix $\bSigma$ and the mean vector $\bmu$. If
			\begin{equation}\label{gamma_min}
				\gamma\geq 1+4\bmu\T\bSigma\inv\bmu,
		\end{equation}
		then the approximate solution of the optimization problem \eqref{eq:investor's problem} with the power utility function \eqref{eq: power utility} is given by
		\begin{equation}\label{eq: optimal portfolio weights}
			\w^*=\Rf\frac{\frac{\gamma - 1}{2} - \bmu\T\bSigma\inv\bmu - \sqrt{\frac{(\gamma - 1)^2}{4} - (\gamma - 1)\bmu\T\bSigma\inv\bmu}}{\left(\bmu\T\bSigma\inv\bmu\right)^2}\bSigma\inv\bmu.
		\end{equation}
	\end{Th}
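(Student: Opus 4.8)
The plan is to reduce the optimization \eqref{eq:investor's problem} to a one-dimensional problem by exploiting the structure of the log-normal approximation and the mutual-fund structure of the excess-return block. First I would write the objective explicitly: under the log-normal approximation, the gross portfolio return $R_f+\w\T\bR$ is treated as $logN(\alpha,\beta^2)$ with $\alpha=\ln(\w\T\bmu+\Rf)$ and $\beta^2=\w\T\bSigma\w/(\w\T\bmu+\Rf)^2$. Applying the moment formula \eqref{eq: moments of log-normal distribution} with $\lambda=1-\gamma$ gives
\begin{equation*}
E[U(W)]=\frac{W_0^{1-\gamma}}{1-\gamma}\exp\!\left((1-\gamma)\ln(\w\T\bmu+\Rf)+\tfrac12(1-\gamma)^2\frac{\w\T\bSigma\w}{(\w\T\bmu+\Rf)^2}\right).
\end{equation*}
Because $1-\gamma<0$, maximizing $E[U(W)]$ is equivalent to minimizing the exponent, i.e. to
\begin{equation*}
\min_{\w}\ \left[(1-\gamma)\ln(\w\T\bmu+\Rf)+\tfrac12(1-\gamma)^2\frac{\w\T\bSigma\w}{(\w\T\bmu+\Rf)^2}\right].
\end{equation*}

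Next I would argue that at the optimum $\w$ must be proportional to $\bSigma\inv\bmu$. Fix the value of the linear functional $m:=\w\T\bmu$. The only place $\w$ enters other than through $m$ is the quadratic form $\w\T\bSigma\w$, which we want as small as possible (the coefficient $\tfrac12(1-\gamma)^2>0$). Minimizing $\w\T\bSigma\w$ subject to $\w\T\bmu=m$ is a standard Lagrange problem whose solution is $\w=\dfrac{m}{\bmu\T\bSigma\inv\bmu}\bSigma\inv\bmu$, with minimal value $\w\T\bSigma\w=m^2/(\bmu\T\bSigma\inv\bmu)$. So I substitute $\w=c\,\bSigma\inv\bmu$ and let $s:=\bmu\T\bSigma\inv\bmu>0$, so that $\w\T\bmu=cs$ and $\w\T\bSigma\w=c^2 s$. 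Writing $t:=cs+\Rf=\w\T\bmu+\Rf$ (the expected gross return), the exponent becomes a function of the single scalar $t$:
\begin{equation*}
g(t)=(1-\gamma)\ln t+\tfrac12(1-\gamma)^2\,\frac{(t-\Rf)^2}{s\,t^2}.
\end{equation*}

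Then I would carry out the one-variable calculus. Differentiating $g$ in $t$ and setting $g'(t)=0$ yields, after clearing denominators, a quadratic equation in $t$ (the $\ln t$ term contributes $(1-\gamma)/t$; the second term, being $(1-\gamma)^2/(2s)\cdot(1-\Rf/t)^2$, differentiates to something proportional to $(1-\Rf/t)\cdot\Rf/t^2$, and multiplying through by $t^2$ leaves a quadratic). Solving this quadratic and choosing the root corresponding to a genuine minimum (the one with the minus sign in front of the square root, consistent with \eqref{eq: optimal portfolio weights}), then translating back via $t=cs+\Rf$, i.e. $c=(t-\Rf)/s$, and finally $\w^*=c\,\bSigma\inv\bmu$, should reproduce formula \eqref{eq: optimal portfolio weights}; in particular the $\Rf$ prefactor and the $s^2=(\bmu\T\bSigma\inv\bmu)^2$ in the denominator come out of this back-substitution. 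The discriminant of the quadratic is (a positive multiple of) $\tfrac{(\gamma-1)^2}{4}-(\gamma-1)s$, and requiring it to be nonnegative is exactly the hypothesis \eqref{gamma_min}, $\gamma\ge 1+4\bmu\T\bSigma\inv\bmu$; this is where that condition enters.

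The main obstacle I anticipate is twofold. First, the reduction to $\w\propto\bSigma\inv\bmu$ needs a careful justification that it is legitimate to optimize over the "level set" $\w\T\bmu=m$ first and then over $m$ — one must check that the joint optimum is attained and that no degenerate directions (e.g. $\w\T\bmu+\Rf\le 0$, where the log is undefined) interfere; the paper's standing assumption of positive wealth and $\bmu$ positive definite should handle this, but it must be invoked explicitly. Second, the algebra of solving $g'(t)=0$ and then verifying via $g''(t)>0$ (or a sign analysis of $g'$) that the chosen root is the minimizer — and that the other root is spurious — is the delicate bookkeeping step; getting the square-root sign right so that $\w^*$ matches \eqref{eq: optimal portfolio weights} rather than the other branch is where errors are most likely to creep in. I would also want to remark why a minimum (not a maximum) of $g$ corresponds to a maximum of $E[U(W)]$: because $1-\gamma<0$ flips the direction, as noted above.
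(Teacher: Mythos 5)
Your plan is essentially sound and arrives at the same scalar quadratic as the paper, but it gets there by a genuinely different route in two places. First, to justify $\w\propto\bSigma\inv\bmu$ the paper solves the vector first-order condition of \eqref{eq: G} directly and reads off from its form (a scalar multiple of $\bmu$ plus a scalar multiple of $\bSigma\w$ equals zero) that $\bSigma\w$ must be parallel to $\bmu$; you instead use the two-stage mean--variance projection (for fixed $\w\T\bmu=m$, minimize $\w\T\bSigma\w$), which avoids the matrix calculus and makes the mutual-fund structure transparent, at the cost of having to justify the partial-minimization swap and the sign of the coefficient of $\w\T\bSigma\w$ --- both of which you do. Your one-variable reduction then yields exactly the paper's equation $(\Rf+cJ)^2+(1-\gamma)c\Rf=0$ with $J=\bmu\T\bSigma\inv\bmu$, whose discriminant is a positive multiple of $\tfrac{(\gamma-1)^2}{4}-(\gamma-1)J$, hence the same condition \eqref{gamma_min} and the same two roots. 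Second, for root selection the paper compares $G(\w_+)$ and $G(\w_-)$ directly, showing their difference vanishes at $\gamma=1+4J$ and is monotone in $\gamma$ --- this occupies the bulk of its proof --- whereas you propose a local sign analysis of $g$. Your route is in fact simpler: since $g(t)\to+\infty$ as $t\to0^+$ and $g(t)\to-\infty$ as $t\to\infty$, with exactly two critical points, the smaller one (which corresponds to $c_-$, the minus-sign root) is automatically the local minimum of $g$, i.e.\ the local maximum of expected utility. But this is the one substantive step you assert rather than execute, and it is precisely where the paper spends its effort, so a complete write-up must include that analysis. One caveat shared with the paper: because $g\to-\infty$ as $t\to\infty$, the approximate objective \eqref{eq: objective function} has supremum $0$ approached at infinite leverage, so $\w_-$ is only a local maximizer; neither your argument nor the paper's addresses this, so it is not a gap relative to the paper's own proof.
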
	
	
The proof of the theorem one can find in the \textbf{Appendix} section.
Note that optimal weights \eqref{eq: optimal portfolio weights} do not exist for all the values $\gamma$.   
Next, we will call the result of Theorem \ref{th: optimal portfolio weights theorem} the analytical solution. 
	
\begin{Cor}\label{cor: mean-variance parabola}
	Given by Theorem \ref{th: optimal portfolio weights theorem},
	the corresponding portfolio has its expected excess return and variance as a parabola:
	\begin{equation*}
		\frac{\left(\boldsymbol{\omega}^{*\prime}\bmu\right)^2}{\bmu\T\bSigma\inv\bmu} = {\boldsymbol{\omega}^{*\prime}\bSigma\w}.
	\end{equation*}
\end{Cor}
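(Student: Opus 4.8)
The plan is to exploit the single structural fact that makes the corollary almost immediate: the optimal weight vector given in \eqref{eq: optimal portfolio weights} is a \emph{scalar multiple} of $\bSigma\inv\bmu$. Write $\w^*=c\,\bSigma\inv\bmu$, where
\[
c=\Rf\,\frac{\frac{\gamma-1}{2}-\bmu\T\bSigma\inv\bmu-\sqrt{\frac{(\gamma-1)^2}{4}-(\gamma-1)\bmu\T\bSigma\inv\bmu}}{\left(\bmu\T\bSigma\inv\bmu\right)^2}
\]
is a (well-defined, under \eqref{gamma_min}) real number. The explicit form of $c$ will turn out to be irrelevant, since it cancels; all that matters is that $\w^*\parallel\bSigma\inv\bmu$.

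First I would compute the numerator of the left-hand side. Since $\w^{*\prime}\bmu=c\,\bmu\T\bSigma\inv\bmu$ and $\bmu\T\bSigma\inv\bmu$ is a scalar, squaring gives $\left(\w^{*\prime}\bmu\right)^2=c^2\left(\bmu\T\bSigma\inv\bmu\right)^2$. Dividing by $\bmu\T\bSigma\inv\bmu$ (which is strictly positive because $\bSigma$ is positive definite and $\bmu\neq\boldsymbol 0$, as is implicit in \eqref{gamma_min}) yields $\left(\w^{*\prime}\bmu\right)^2/\left(\bmu\T\bSigma\inv\bmu\right)=c^2\,\bmu\T\bSigma\inv\bmu$.

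Next I would compute the right-hand side. Using $\bSigma\inv{}\T=\bSigma\inv$ (symmetry of $\bSigma$) and $\bSigma\inv\bSigma\bSigma\inv=\bSigma\inv$, we get $\w^{*\prime}\bSigma\w^*=c^2\,\bmu\T\bSigma\inv\bSigma\bSigma\inv\bmu=c^2\,\bmu\T\bSigma\inv\bmu$. Comparing the two displays shows both sides equal $c^2\,\bmu\T\bSigma\inv\bmu$, which establishes the claimed identity.

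In truth there is no real obstacle here: the only thing to be careful about is recording that $\w^*$ is collinear with $\bSigma\inv\bmu$ and that $\bmu\T\bSigma\inv\bmu>0$ so the division is legitimate. One could optionally remark that the identity says the optimal portfolio lies on the efficient frontier parabola in mean--variance space, which is the point of stating it as a corollary; no further computation is needed.
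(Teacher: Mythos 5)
Your argument is correct and is essentially identical to the paper's own proof: both rest on the observation that $\w^*=c\,\bSigma\inv\bmu$, from which $\w^{*\prime}\bmu=cJ$ and $\w^{*\prime}\bSigma\w^*=c^2J$ with $J=\bmu\T\bSigma\inv\bmu$, so both sides of the identity equal $c^2J$. Nothing further is needed.
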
	

\begin{Cor}
	\label{cor: mean and variance decrease of gama}
	Given by Theorem \ref{th: optimal portfolio weights theorem}, the corresponding portfolio has its expected gross return and variance being decrease functions of the coefficient of relative risk aversion $\gamma$.
\end{Cor}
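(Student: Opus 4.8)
The plan is to collapse the closed form \eqref{eq: optimal portfolio weights} into a scalar multiple of a fixed direction and then study how that scalar moves with $\gamma$. Put $a:=\bmu\T\bSigma\inv\bmu$, a strictly positive constant whenever $\bmu\neq\boldsymbol{0}$, so that $\w^*=c(\gamma)\,\bSigma\inv\bmu$ with
$$c(\gamma)=\Rf\,\frac{\tfrac{\gamma-1}{2}-a-\sqrt{\tfrac{(\gamma-1)^2}{4}-(\gamma-1)a}}{a^2}.$$
The first --- and only slightly clever --- step is to rationalise the numerator: multiplying top and bottom by the conjugate $\tfrac{\gamma-1}{2}-a+\sqrt{\tfrac{(\gamma-1)^2}{4}-(\gamma-1)a}$, the difference of squares in the numerator equals exactly $a^2$, which cancels the $a^2$ in the denominator. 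This leaves
$$c(\gamma)=\frac{\Rf}{\tfrac{\gamma-1}{2}-a+\sqrt{\tfrac{(\gamma-1)^2}{4}-(\gamma-1)a}}=:\frac{\Rf}{D(\gamma)}.$$

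Next I would read off the two quantities. Since $\bSigma$ is symmetric, $\bSigma\inv\bSigma\bSigma\inv=\bSigma\inv$, so the portfolio's expected excess return is $\w^{*\prime}\bmu=c(\gamma)\,a$, its expected gross return is $\Rf+c(\gamma)\,a$, and its variance is $\w^{*\prime}\bSigma\w^*=c(\gamma)^2 a$. Both are monotone images of $c(\gamma)$ once its sign is fixed, so the whole corollary reduces to showing that $c(\gamma)$ is positive and strictly decreasing on the admissible range $\gamma\geq 1+4a$ imposed by \eqref{gamma_min}. (As a cross-check, monotonicity of the variance also follows from Corollary \ref{cor: mean-variance parabola} once monotonicity of the expected return is established.)

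For this, write $u=\gamma-1\geq 4a$, so $D(\gamma)=\tfrac{u}{2}-a+\sqrt{\tfrac{u^2}{4}-ua}$. On $u\geq 4a$ we have $\tfrac{u}{2}-a\geq a>0$, hence $D(\gamma)\geq a>0$, and since $\Rf=1+\rf>0$ this gives $c(\gamma)>0$. For monotonicity, $\tfrac{u}{2}-a$ is increasing in $u$, and $\tfrac{d}{du}\!\big(\tfrac{u^2}{4}-ua\big)=\tfrac{u}{2}-a>0$ for $u\geq 4a$, so the radicand and hence $D$ are strictly increasing in $u$, i.e.\ in $\gamma$. Therefore $c(\gamma)=\Rf/D(\gamma)$ is strictly positive and strictly decreasing in $\gamma$; feeding this back, $\Rf+c(\gamma)a$ decreases because $a>0$, and $c(\gamma)^2 a$ decreases because $c(\gamma)>0$ and $c(\gamma)$ decreases, which is the assertion of the corollary.

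The only place where one has to take any care is the boundary $\gamma=1+4a$, where the radicand vanishes: there $D$ is still continuous and equals $a>0$, and it is strictly larger at every interior point, so the monotonicity extends to the closed admissible interval. Apart from that, everything after the rationalisation step is elementary, and I expect the rationalisation itself --- recognising that the ungainly numerator of \eqref{eq: optimal portfolio weights} is just $a^2$ divided by a manifestly increasing positive function --- to be the one insight the argument hinges on.
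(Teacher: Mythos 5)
Your proof is correct, and it reaches the conclusion by a cleaner route than the paper's. The paper differentiates $\Rf+\boldsymbol{\omega}^{*\prime}\bmu=\Rf\bigl(\gamma-1-\sqrt{(\gamma-1)^2-4(\gamma-1)J}\bigr)/(2J)$ directly in $\gamma$, simplifies the derivative with the conjugate identity $\bigl(\tfrac{\gamma-1}{2}-J\bigr)^2-\bigl(\tfrac{(\gamma-1)^2}{4}-(\gamma-1)J\bigr)=J^2$ to exhibit a manifestly negative quantity, separately proves positivity of the gross return with the same identity, and then gets the variance statement by appealing to the parabola relation of Corollary \ref{cor: mean-variance parabola}. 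You apply that identity once, up front, to rationalise the scalar $c(\gamma)$ into $\Rf$ divided by $\tfrac{\gamma-1}{2}-J+\sqrt{\tfrac{(\gamma-1)^2}{4}-(\gamma-1)J}$; after that, positivity and monotonicity of both the mean $\Rf+c(\gamma)J$ and the variance $c(\gamma)^2J$ follow from the elementary facts that this denominator is bounded below by $J>0$ and increasing on $\gamma\ge 1+4J$. Your organisation buys a small amount of extra rigor: the paper's final step (``variance decreases as the square of $\boldsymbol{\omega}^{*\prime}\bmu$'') tacitly needs $\boldsymbol{\omega}^{*\prime}\bmu\ge 0$, which it never states, whereas your representation gives $c(\gamma)>0$ and hence the sign of $\boldsymbol{\omega}^{*\prime}\bmu=c(\gamma)J$ for free. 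Two cosmetic cautions: you reuse the letter $D$ for the denominator while the paper reserves $D$ for the discriminant under the square root, so rename one of them if the arguments are merged; and the identity $\boldsymbol{\omega}^{*\prime}\bSigma\boldsymbol{\omega}^{*}=c^2\bmu\T\bSigma\inv\bmu$ needs $\bSigma\inv$ symmetric, i.e.\ $\bSigma\inv\bSigma\bSigma\inv=\bSigma\inv$, exactly as you note.
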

	The proofs of Corollary \ref{cor: mean-variance parabola} and Corollary \ref{cor: mean and variance decrease of gama} see in the \textbf{Appendix} section.
	
Both corollaries describe two economic insights. The first one is similar to the mean-variance efficient frontier trade-off \cite[see,][]{merton1972analytic,pennacchi2008theory,brandt2009portfolio}; namely, the higher expected return is achievable only by taking more risk, and vice versa.
The second one corresponds to the interpretation of risk aversion, which is having a higher $\gamma$ means the less volatile portfolio in exchange for a portion of the profit. 	
\begin{Rem}
	If $\gamma=1+4J$ the optimal weights in Theorem \ref{th: optimal portfolio weights theorem} will be
	$$\w^*_{0}=\Rf\frac{\bSigma\inv\bmu}{\bmu\T\bSigma\inv\bmu},$$
with the resulting portfolio excess return $\bmu\T\w^*_0=\Rf$.
\end{Rem}

\begin{Cor}\label{cor: tangency portfolio}
	Let Theorem \ref{th: optimal portfolio weights theorem} holds. If we consider no investment into the risk-free asset, namely all the wealth distributed between risky assets, then the optimal weights will look as follows:
	$$\w_{tgc}=\frac{\bSigma\inv\bmu}{\1\T\bSigma\inv\bmu}.$$
	The corresponding coefficient of relative risk aversion is given by:
	$$\gamma_{tgc}=\left(\frac{\bmu\T\bSigma\inv\bmu}{\1\T\bSigma\inv\bmu}+\Rf\right)^2\frac{\1\T\bSigma\inv\bmu}{\Rf}+1.$$
\end{Cor}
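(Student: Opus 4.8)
The plan is to exploit the fact that the optimal weight vector \eqref{eq: optimal portfolio weights} is, for every admissible $\gamma$, a scalar multiple of $\bSigma\inv\bmu$; write $\w^{*}=c(\gamma)\,\bSigma\inv\bmu$ with
\[
c(\gamma)=\Rf\,\frac{\frac{\gamma-1}{2}-\bmu\T\bSigma\inv\bmu-\sqrt{\frac{(\gamma-1)^{2}}{4}-(\gamma-1)\bmu\T\bSigma\inv\bmu}}{\left(\bmu\T\bSigma\inv\bmu\right)^{2}}\, .
\]
Requiring that nothing be held in the risk-free asset means $1-\1\T\w^{*}=0$, i.e.\ $c(\gamma)\,\1\T\bSigma\inv\bmu=1$. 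As long as $\1\T\bSigma\inv\bmu\neq 0$ this pins down $c(\gamma)=1/(\1\T\bSigma\inv\bmu)$ and therefore $\w_{tgc}=\bSigma\inv\bmu/(\1\T\bSigma\inv\bmu)$, the classical maximum Sharpe ratio (tangency) portfolio; this is the first assertion.

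For the second assertion I would solve the scalar equation $c(\gamma)=1/(\1\T\bSigma\inv\bmu)$ for $\gamma$. Abbreviating $J=\bmu\T\bSigma\inv\bmu$, $S=\1\T\bSigma\inv\bmu$, $A=J^{2}/(\Rf S)$ and setting $t=\gamma-1$, the equation rearranges to
\[
\frac{t}{2}-J-A=\sqrt{\frac{t^{2}}{4}-tJ}\, .
\]
Squaring both sides makes the quadratic terms $\tfrac{t^{2}}{4}-tJ$ cancel, leaving the linear relation $(J+A)^{2}=tA$, hence $t=(J+A)^{2}/A$. Substituting $A=J^{2}/(\Rf S)$ and using $J+A=J(J+\Rf S)/(\Rf S)$, the expression collapses to $t=(J+\Rf S)^{2}/(\Rf S)=S\bigl(J/S+\Rf\bigr)^{2}/\Rf$, which is precisely
\[
\gamma_{tgc}=1+t=\left(\frac{\bmu\T\bSigma\inv\bmu}{\1\T\bSigma\inv\bmu}+\Rf\right)^{2}\frac{\1\T\bSigma\inv\bmu}{\Rf}+1 .
\]

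The only genuine care is needed over the root introduced by squaring and over existence. On one hand, plugging $t=(J+A)^{2}/A$ back in gives $\tfrac{t}{2}-J-A=(J+A)(J-A)/(2A)$, which is nonnegative exactly when $\Rf S\ge J$; on the other hand $c(\gamma)$ is continuous and strictly decreasing on $[1+4J,\infty)$, running from $\Rf/J$ down to $0$, so it attains the target $1/S$ if and only if $1/S\le\Rf/J$, i.e.\ again $\Rf S\ge J$. Hence the $\gamma_{tgc}$ obtained above is the true solution precisely in the regime where a solution exists, and one finishes by noting it automatically respects hypothesis \eqref{gamma_min}, since $t=(J+A)^{2}/A\ge 4J$ is equivalent to $(J-A)^{2}\ge 0$. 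I expect this bookkeeping---not the algebra, which is routine---to be the main obstacle.
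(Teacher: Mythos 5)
Your proof is correct. The first half is identical to the paper's: both impose $\1\T\w^{*}=1$ on the proportional form $\w^{*}=c\,\bSigma\inv\bmu$ to get $c=1/(\1\T\bSigma\inv\bmu)$. For the second half you take a genuinely different route: you invert the closed-form expression $c_{-}(\gamma)=1/S$, which forces you to square away the radical and then police the extraneous root. The paper instead substitutes $c=1/S$ directly into the first-order-condition quadratic $(\Rf+cJ)^{2}+(1-\gamma)c\Rf=0$, which is \emph{linear} in $\gamma$, so $\gamma_{tgc}$ drops out in one line with no square roots and no sign-checking. The paper's route is shorter, but it silently finds the $\gamma$ for which $1/S$ is \emph{some} root of the quadratic, without verifying that it is the minus root selected in Theorem~\ref{th: optimal portfolio weights theorem}; your analysis fills exactly that gap, and as a by-product isolates the existence condition $\Rf\,\1\T\bSigma\inv\bmu\geq\bmu\T\bSigma\inv\bmu$ (equivalently $1/S\leq\Rf/J$, the range of $c_{-}$), which the paper never states. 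Your final observation that $t=(J+A)^{2}/A\geq 4J$ follows from $(J-A)^{2}\geq0$ is the same AM--GM check the paper performs via $(a+b)^{2}\geq 4ab$. The one thing you assert without proof is that $c_{-}$ is strictly decreasing from $\Rf/J$ to $0$ on $[1+4J,\infty)$; this does follow from the paper's Corollary~\ref{cor: mean and variance decrease of gama} (since $\w^{*\prime}\bmu=cJ$), so it is a citation away rather than a gap.
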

See the proof of Corollary \ref{cor: tangency portfolio} in the \textbf{Appendix} section.

The weights derived in Corollary \ref{cor: tangency portfolio} represent the single point where the mean-variance efficient frontiers with and without risk-free assets coincide.

\section{Gradient Descent}
Besides the analytical solution, we developed the different one based on Gradient Descent. The last one is a powerful tool used in machine learning and neural networks.  

Gradient Descent is an optimization algorithm that is capable of finding the optimal solution for different problems \cite[see][]{friedman2001elements,goodfellow-et-al-2016, geron2019hands}. This method iteratively reaches the minimum or maximum of the target function. Precisely speaking, it calculates the gradient at the given point and shifts towards the direction of the increasing gradient. The step you reach a zero gradient will guaranty an extremum. But this works only if the target function is concave because one may end up at the local maximum or "flat" area. Initially, we start at a random point and update the solution step by step.  

Let $\{R_i\}_{i=1}^{N}$ be the sample of excess returns simulated from the distribution $\bR$, and let
$$V_0=\frac{1}{N}\sum\limits_{i=1}^{N}\frac{\left(\Rf + \w\T R_i\right)^{1-\gamma}}{1-\gamma}.$$
Then, for a large value $N$, $V_0$ converges the expectation which we need to maximize as the portfolio choice problem:
$$\lim_{N\rightarrow\infty}V_0 = E\left[\frac{\left(\Rf + \w\T\bR\right)^{1-\gamma}}{1-\gamma}\right].$$ 
The Hessian of $V_0$ is negative, 
$$\frac{\partial^2V_0}{\partial\w\partial\w\T} = -\frac{\gamma}{N}\sum\limits_{i=1}^{N}\frac{R_iR_i\T}{\left(\Rf + \w\T R_i\right)^{1+\gamma}}<0,$$
what implies the concavity of $V_0$. Meanwhile, from calculus, we know that a gradient reveals the direction that a function increases. Consequently, the gradient of $V_0$ will always point towards the uphill. Thus, adding the gradient vector to some predefined initial point will create the "path" which leads to the top of the surface.
The process stops when the gradient becomes desirably close to zero. For instance, the euclidean norm can serve as a measure. 

Therefore, one can build the iterative gradient boosting solution given the formula below:
\begin{equation}\label{eq: weights gradient}
	\w_{grd}(i+1)=\w_{grd}(i) + \eta\frac{\partial V_0}{\partial\w}(\w_{grd}(i))=\w_{grd}(i) + \eta\frac{1}{N}\sum\limits_{i=1}^{N}\frac{R_i}{\left(\Rf + \w\T_{grd}(i) R_i\right)^{\gamma}},
\end{equation}
where $\eta$ is a "learning rate" ($\eta=0.1$ is recommended by practitioners). The tuning parameter $\eta$ defines how far we move each iteration. If the steps are too large, the process may not converge by jumping over the solution. On the other hand, very small steps will cause a large number of iteration before reaching the maximum (or minimum for a convex problem)  \cite[see][]{geron2019hands}.

Besides, one can have similar iterative formulas for other utilities, e.g. quadratic and exponential.
Indeed, if $V_0=\frac{1}{N}\sum\limits_{i=1}^{N}U\left(W_0\left(\Rf+\w\T R_i\right)\right)$ and taking into account that the utility is a concave function we have a negative Hessian:

$$\frac{\partial^2V_0}{\partial\w\partial\w\T} =\sum\limits_{i=1}^{N}U^{\prime\prime}\left(W_0\left(\Rf+\w\T R_i\right)\right)W_0^2{R_iR_i\T}<0.$$
Therefore, any concave utility is solvable with this approach.
Moreover, there is no need to know the distribution of returns. This information is automatically coming from historical data.  

	\section{Comparison study}	
In this chapter, we demonstrate the relevance of our findings. The empirical study compares the weights derived in Theorem \ref{th: optimal portfolio weights theorem} and the Gradient Descent solution to the benchmark results introduced by \citet{brandt2005simulation}. The benchmark solution is obtainable by a Taylor series expansion of the utility function, applicable for any given utility. Next, we will call it the numerical solution.

\subsection{Numerical solution}
Presented below, one can find the solution as the fourth-order expansion presented in \citet{brandt2005simulation}.

Let $\bR_{s}$ be the vector of excess returns on the risky assets at a time point $s$ and $\Rf$ be the gross return of a risk-free asset as before. Let $\w_s$ define the portfolio weights for the risky asset chosen at a time point $s$.

Following \citet{brandt2005simulation} we receive:
	\begin{equation}\label{eq: numerical weights for any utility}
		\begin{split}
			&\w_{t}(i+1)\approx-\left\{E_{t}\left[\frac{\partial^2U\left(\hat{W}_T\right)}{\partial\hat{W}_T^2}\prod\limits_{s=t+1}^{T-1}\left(\hat{\w}_s\T\bR_{s+1}+R_f\right)^2\bR_{t+1}\bR_{t+1}\T\right]W_t\right\}^{-1}\\
			&\times\Bigg\{E_{t}\left[\frac{\partial U\left(\hat{W}_T\right)}{\partial\hat{W}_T}\prod\limits_{s=t+1}^{T-1}\left(\hat{\w}_s\T\bR_{s+1}+R_f\right)\bR_{t+1}\right]\\
			&+\frac{1}{2}E_{t}\left[\frac{\partial^3U\left(\hat{W}_T\right)}{\partial\hat{W}_T^3}\prod\limits_{s=t+1}^{T-1}\left(\hat{\w}_s\T\bR_{s+1}+R_f\right)^3\left(\w_{t}\T(i)\bR_{t+1}\right)^2\bR_{t+1}\right]W_t^2\\
			&+\frac{1}{6}E_{t}\left[\frac{\partial^4U\left(\hat{W}_T\right)}{\partial\hat{W}_T^4}\prod\limits_{s=t+1}^{T-1}\left(\hat{\w}_s\T\bR_{s+1}+R_f\right)^4\left(\w_{t}\T(i)\bR_{t+1}\right)^3\bR_{t+1}\right]W_t^3\Bigg\},
		\end{split}
	\end{equation}
	where $\hat{\w}_s$ defines already calculated weights ($s=t+1,\dots,T-1$) and
	\begin{equation}
		\hat{W}_T=W_tR_f\prod\limits_{s=t+1}^{T-1}\left(\hat{\w}_s\T\bR_{s+1}+R_f\right).
	\end{equation}
Note that $E_t[.]$ stands for the expectation conditional on the information available at a time point $t$.

	After calculating the derivatives of the power utility and substituting $\hat{W}_T$, one has: 
	\begin{equation}\label{eq: numerical solution omega_t risk-free}
		\begin{split}
			&\w_{t}(i+1)\approx\frac{1}{\gamma}\left\{E_{t}\left[\prod\limits_{s=t+1}^{T-1}\left(\hat{\w}_s\T\bR_{s+1}+R_f\right)^{1-\gamma}\bR_{t+1}\bR_{t+1}\T\right]\right\}^{-1}\\
			&\times\Bigg\{R_fE_{t}\left[\prod\limits_{s=t+1}^{T-1}\left(\hat{\w}_s\T\bR_{s+1}+R_f\right)^{1-\gamma}\bR_{t+1}\right]\\
			&+\frac{\gamma(\gamma+1)}{2R_f}E_{t}\left[\prod\limits_{s=t+1}^{T-1}\left(\hat{\w}_s\T\bR_{s+1}+R_f\right)^{1-\gamma}\left(\w_{t}\T(i)\bR_{t+1}\right)^2\bR_{t+1}\right]\\
			&-\frac{\gamma(\gamma+1)(\gamma+2)}{6R_f^2}E_{t}\left[\prod\limits_{s=t+1}^{T-1}\left(\hat{\w}_s\T\bR_{s+1}+R_f\right)^{1-\gamma}\left(\w_{t}\T(i)\bR_{t+1}\right)^3\bR_{t+1}\right]\Bigg\},
		\end{split}
	\end{equation}
	where
	\begin{equation}\label{eq: numerical solution omega_t(0) risk-free}
		\begin{split}
			\w_{t}(0)&\approx\frac{1}{\gamma}\left\{E_{t}\left[\prod\limits_{s=t+1}^{T-1}\left(\hat{\w}_s\T\bR_{s+1}+R_f\right)^{1-\gamma}\bR_{t+1}\bR_{t+1}\T\right]\right\}^{-1}\\
			&\times R_f E_{t}\left[\prod\limits_{s=t+1}^{T-1}\left(\hat{\w}_s\T\bR_{s+1}+R_f\right)^{1-\gamma}\bR_{t+1}\right]
		\end{split}
	\end{equation}
	and
	\begin{equation}\label{eq: numerical solution omega_T risk-free}
		\begin{split}
			&\w_{T-1}(i+1)\approx\frac{1}{\gamma}\left\{E_{T-1}\left[\bR_{T}\bR_{T}\T\right]\right\}^{-1}\\
			&\times\Bigg\{R_fE_{T-1}\left[\bR_{T}\right]+\frac{\gamma(\gamma+1)}{2R_f}E_{T-1}\left[\left(\w_{T-1}\T(i)\bR_{T}\right)^2\bR_{T}\right]\\
			&-\frac{\gamma(\gamma+1)(\gamma+2)}{6R_f^2}E_{T-1}\left[\left(\w_{T-1}\T(i)\bR_{T}\right)^3\bR_{T}\right]\Bigg\},
		\end{split}
	\end{equation}
	with
	\begin{equation}\label{eq: numerical solution omega_T(0) risk-free}
		\begin{split}
			&\w_{T-1}(0)\approx\frac{1}{\gamma}\left\{E_{T-1}\left[\bR_{T}\bR_{T}\T\right]\right\}^{-1}E_{T-1}\left[\bR_{T}\right].
		\end{split}
	\end{equation}
	Hence, this is a complete backward scheme for the numerical solution of a multi-period investment strategy. As soon as we are in a single-period investment framework, formulas \eqref{eq: numerical solution omega_T risk-free} and \eqref{eq: numerical solution omega_T(0) risk-free} produce the desired weights.
	
	Thus,
	\begin{equation*}
		\begin{split}
			&\w_{num}(i+1)\approx\frac{1}{\gamma}\left\{E\left[\bR\bR\T\right]\right\}^{-1}\\
			&\times\left\{R_fE\left[\bR\right]+\frac{\gamma(\gamma+1)}{2R_f}E\left[\left(\w\T(i)\bR\right)^2\bR\right] - \frac{\gamma(\gamma+1)(\gamma+2)}{6R_f^2}E\left[\left(\w\T(i)\bR\right)^3\bR\right]\right\},
		\end{split}
	\end{equation*}
	with
	\begin{equation*}
		\begin{split}
			&\w_{num}(0)\approx\frac{1}{\gamma}\left\{E\left[\bR\bR\T\right]\right\}^{-1}E\left[\bR\right].
		\end{split}
	\end{equation*}
	And given the normality of returns, $\bR\sim N(\bmu,\bSigma)$, we have
	
		\begin{equation}\label{eq: numerical solution omega risk-free}
		\begin{split}
			&\w_{num}(i+1)\approx\frac{1}{\gamma}\left\{\bSigma+\bmu\bmu\T\right\}^{-1}\\
			&\times\left\{R_f\bmu + \frac{\gamma(\gamma + 1)}{2R_f}E\left[\left(\w\T(i)\bR\right)^2\bR\right] - \frac{\gamma(\gamma+1)(\gamma+2)}{6R_f^2}E\left[\left(\w\T(i)\bR\right)^3\bR\right]\right\},
		\end{split}
	\end{equation}
	with
	\begin{equation}\label{eq: numerical solution omega(0) risk-free}
		\begin{split}
			&\w_{num}(0)\approx\frac{\Rf}{\gamma}\left\{\bSigma+\bmu\bmu\T\right\}^{-1}\bmu.
		\end{split}
	\end{equation}
	
	The only task left is to calculate the expectations. Similarly as in Gradient Descent method, the law of large numbers becomes handy. If $f(x)$ defines a probability density function of a random variable $x$, then the expectation $E\left[g(x)\right]$ can be estimated by 
	\begin{equation*}
		\frac{1}{N}\sum\limits_{i=1}^Ng(x_i)
	\end{equation*}
	where $\{x_i\}_{i=1}^N$ is a large sample taken from the distribution given by $f(x)$ \cite[see][]{barberis2000investing}.

	 	\subsection{Portfolio comparison}
	 	
	 	In the comparison study, we simulate the large number ($N=10^6$) of hypothetical vectors of asset returns given the multivariate normal distribution. Using this data set, we calculate the weights for numerical solution and the Gradient Descent. At the same time, we calculate the optimal weights for the analytical solution using the mean vector and the covariance matrix from the distribution. This is repeated for several values of the risk aversion parameter, i.e. $\gamma\in\{5,10,15,20\}$. As a result, we receive a set of three weight vectors for each $\gamma$.
	 	Next, we calculate the resulting portfolio returns and final utilities using the simulated data and the weights.
	 	Without the loss of generality, one can consider that the initial wealth $W_0$ is equal to one. 
	 	
	We estimate the sample mean vector and the covariance matrix from the stock data to obtain the parameters of the multivariate normal distribution. We took weekly prices from the 16th of January 2012 till the 9th of September 2019 for the DAX index, Nasdaq Futures and Russell 2000 Futures.

	The resulting mean vector and covariance matrix 
	\begin{equation*}
		\bmu=\begin{bmatrix}
			0.00134-\rf \\
			0.00231-\rf \\
			0.00139-\rf
		\end{bmatrix},
		\bSigma=
	\begin{bmatrix}
		 0.000545 && 0.000319 && 0.000341\\
		 0.000319 && 0.000410 && 0.000393\\
		 0.000341 && 0.000393 && 0.000487\\
	\end{bmatrix},
	\end{equation*}
	stay for parameters of the multivariate normal distribution $N(\bmu,\bSigma)$ for hypothetical excess returns. It is worth mentioning that we consider the risk-free rate $\rf$ of 0.06\%, resulting in 3\% of yearly yield. 
	
	\begin{table}[]
			\begin{center}
		\begin{tabular}{rrrrr}
			\hline
			$\gamma$& & &\multicolumn{1}{r}{}&  \\ 
			\cline{2-5} 
			& \multicolumn{1}{r}{Analytical} & \multicolumn{1}{r}{Numerical} & \multicolumn{1}{r}{Gradient Descent} &  \multicolumn{1}{r}{Measure}\\
			\hline 
			5  & 
			\begin{tabular}[c]{@{}l@{}} -0.24761 \\ 0.03487 \\ {-0.24461} \\ 0.03387 
			\end{tabular}   & 
			
			\begin{tabular}[c]{@{}l@{}} -0.24748 \\ 0.02747 \\ -0.24560 \\ 0.02698				
			\end{tabular}    & 
			
			\begin{tabular}[c]{@{}l@{}} -0.24748 \\ 0.02699 \\ -0.24566 \\ 0.02653
			\end{tabular}       & 
			
			\begin{tabular}[c]{@{}l@{}}sample mean\\ sample standard deviation \\ median \\ median absolute deviation\end{tabular} \\
			\cline{2-5}
			10  & 
			\begin{tabular}[c]{@{}l@{}} -0.10957 \\ 0.01530 \\ {-0.10840} \\ 0.01497
			\end{tabular}   & 
			
			\begin{tabular}[c]{@{}l@{}} -0.10956 \\ 0.01369 \\ -0.10861 \\ 0.01345
			\end{tabular}    & 
			
			\begin{tabular}[c]{@{}l@{}} -0.10956 \\ 0.01343 \\ -0.10865 \\ 0.01321
			\end{tabular}       & 
			
		\begin{tabular}[c]{@{}l@{}}sample mean\\ sample standard deviation \\ median \\ median absolute deviation\end{tabular} \\
			\cline{2-5} 
			15 & 
			\begin{tabular}[c]{@{}l@{}} -0.07020 \\ 0.00978 \\ {-0.06947} \\ 0.00959
			\end{tabular}  & 
			
			\begin{tabular}[c]{@{}l@{}} -0.07020 \\ 0.00909 \\ -0.06957 \\ 0.00894
			\end{tabular}   & 
			
			\begin{tabular}[c]{@{}l@{}} -0.07020 \\ 0.00892 \\ -0.06959 \\ 0.00878
			\end{tabular}    & 
			
			\begin{tabular}[c]{@{}l@{}}sample mean\\ sample standard deviation \\ median \\ median absolute deviation\end{tabular}\\
			\cline{2-5} 
			20 & 
			\begin{tabular}[c]{@{}l@{}} -0.05156 \\ 0.00718 \\ {-0.05104} \\ 0.00704
			\end{tabular} & 
			
			\begin{tabular}[c]{@{}l@{}} -0.05156 \\ 0.00680 \\ -0.05109 \\ 0.00668
			\end{tabular} & 
			
			\begin{tabular}[c]{@{}l@{}} -0.05156 \\ 0.00667 \\ -0.05111 \\ 0.00657
			\end{tabular} & 			
			\begin{tabular}[c]{@{}l@{}}sample mean\\ sample standard deviation \\ median \\ median absolute deviation\end{tabular}\\
			\hline
		\end{tabular}
	\end{center}
		\caption{The sample means of the power utility of the final wealth for the numerical solution and the derived strategy for $\gamma \in \{5, 10, 15, 20\}$}
		\label{table: expected power utility}
	\end{table}
	
	\begin{figure}[]		
		\includegraphics[width=0.5\textwidth]{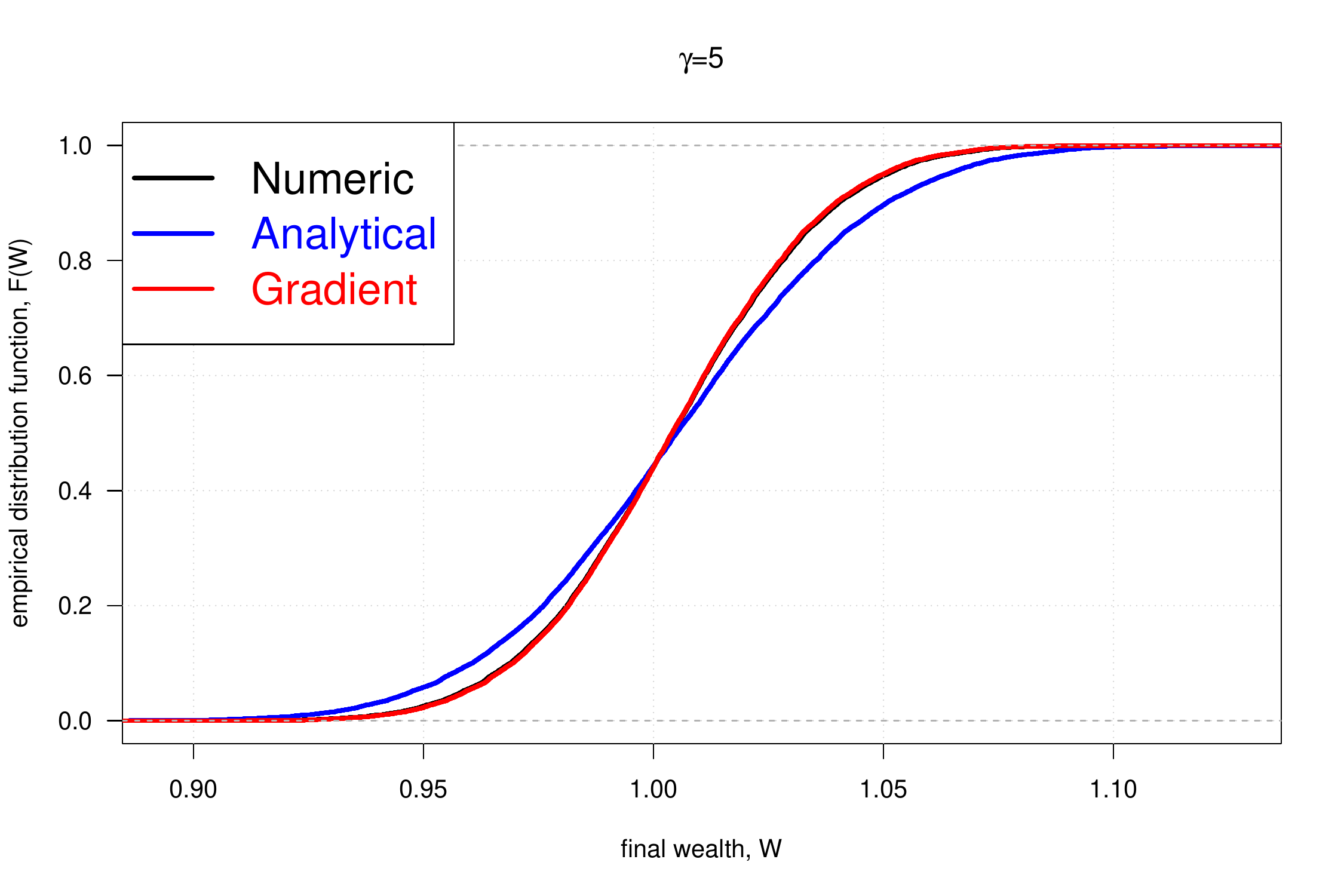}\includegraphics[width=0.5\textwidth]{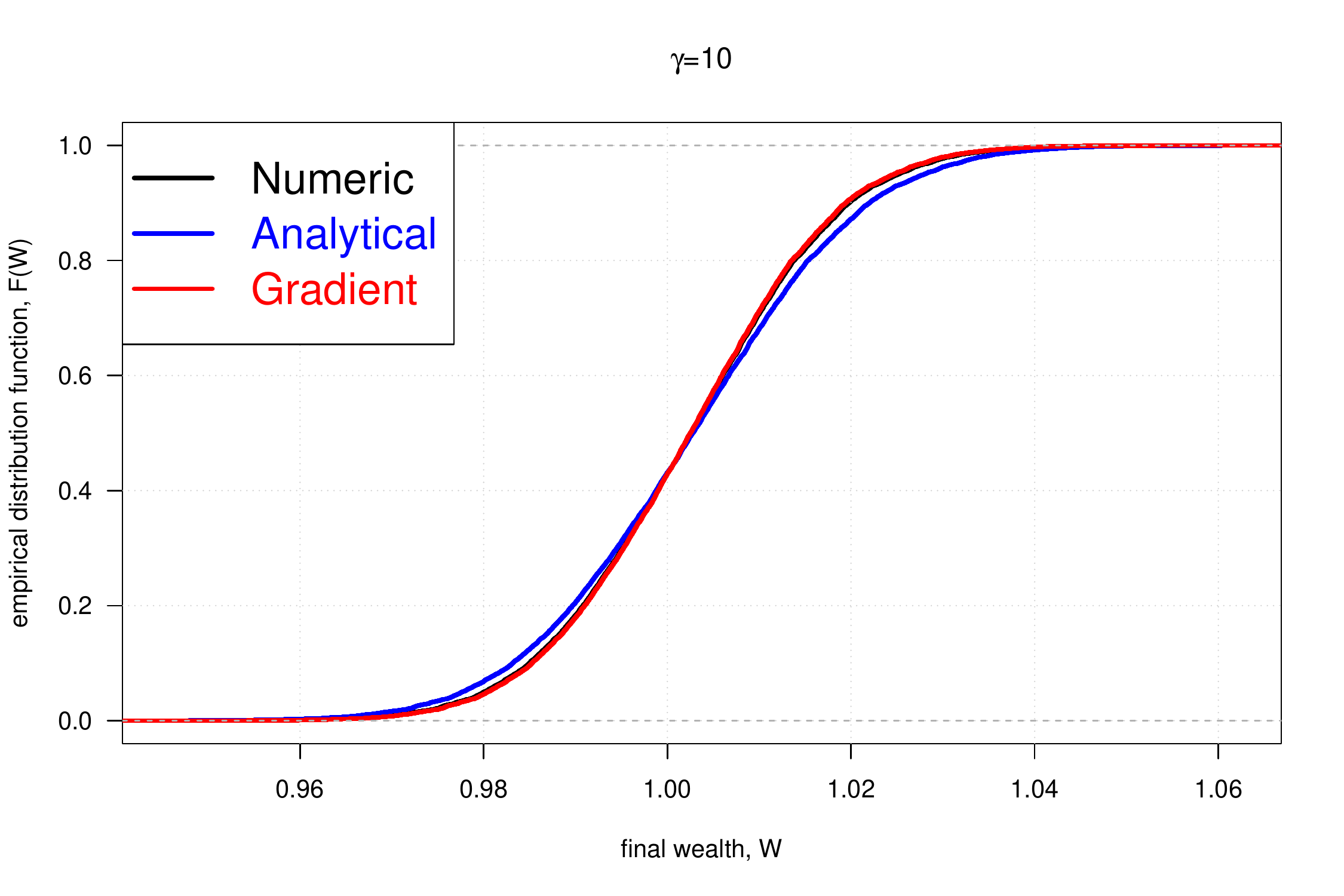}
		\includegraphics[width=0.5\textwidth]{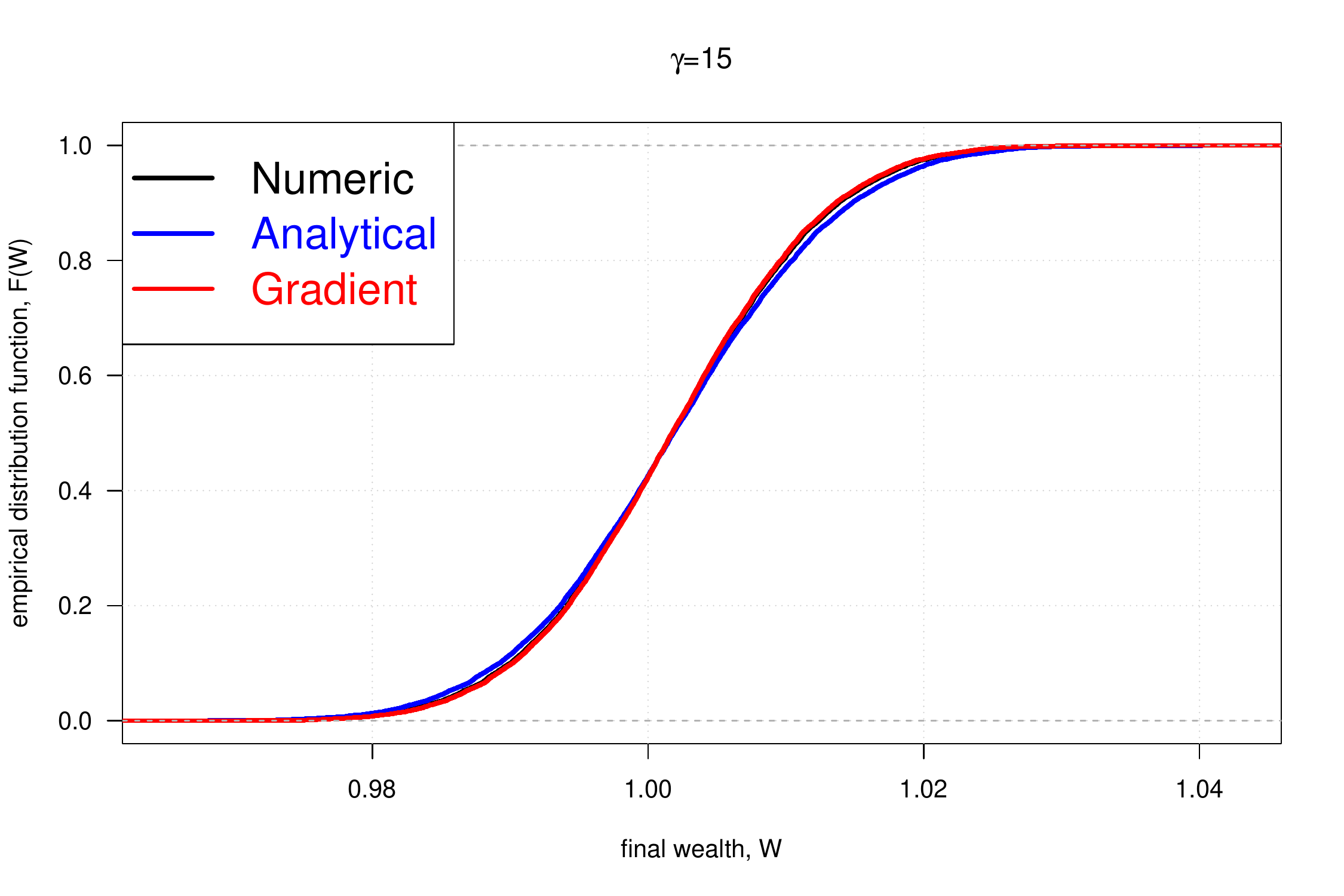}\includegraphics[width=0.5\textwidth]{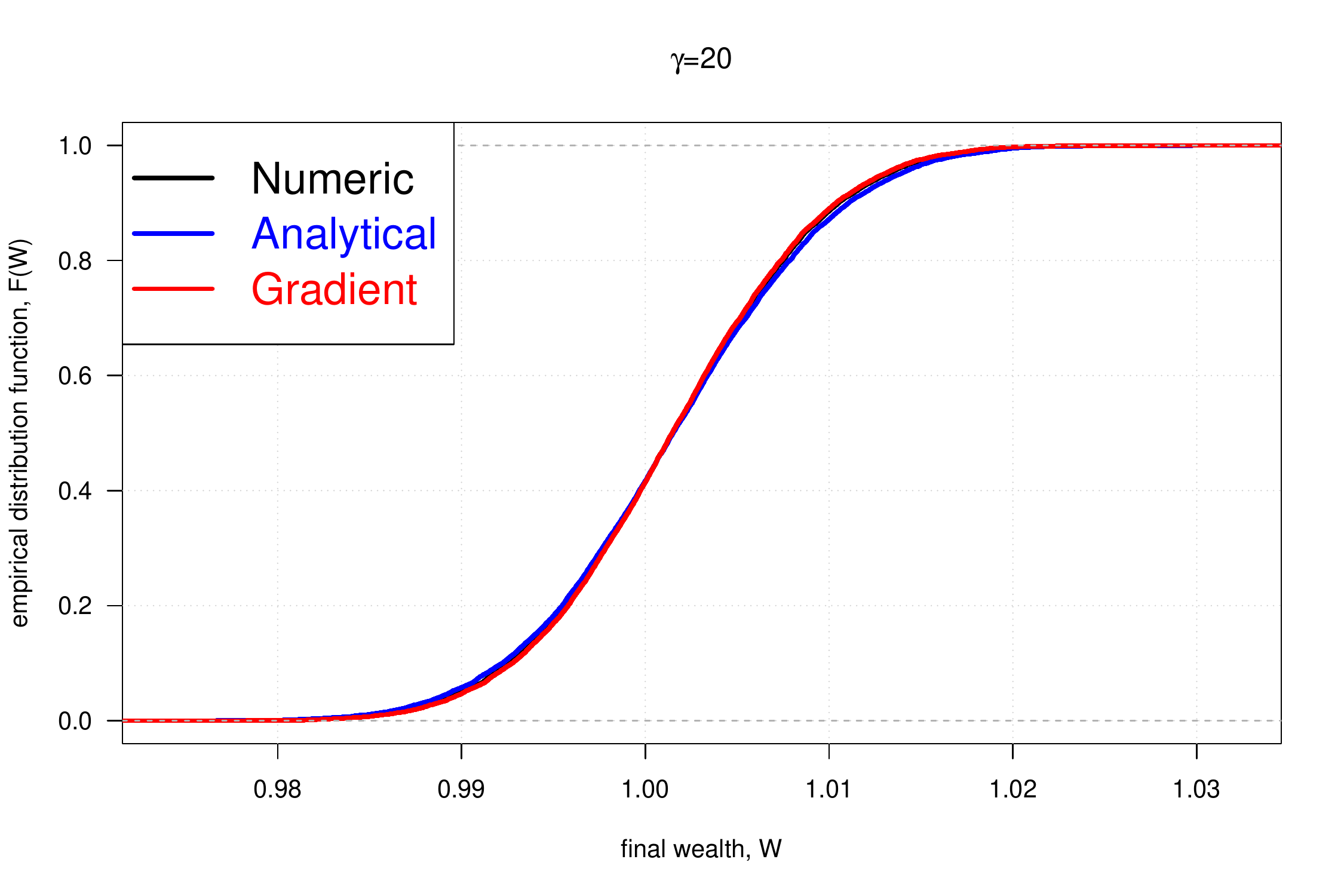}
		\caption{Empirical distribution function of the final wealth for three strategies in case of $\gamma\in \{5, 10, 15, 20\}$}
		\label{fig: ecdf w}
	\end{figure}
	\begin{figure}[]		
		\includegraphics[width=0.5\textwidth]{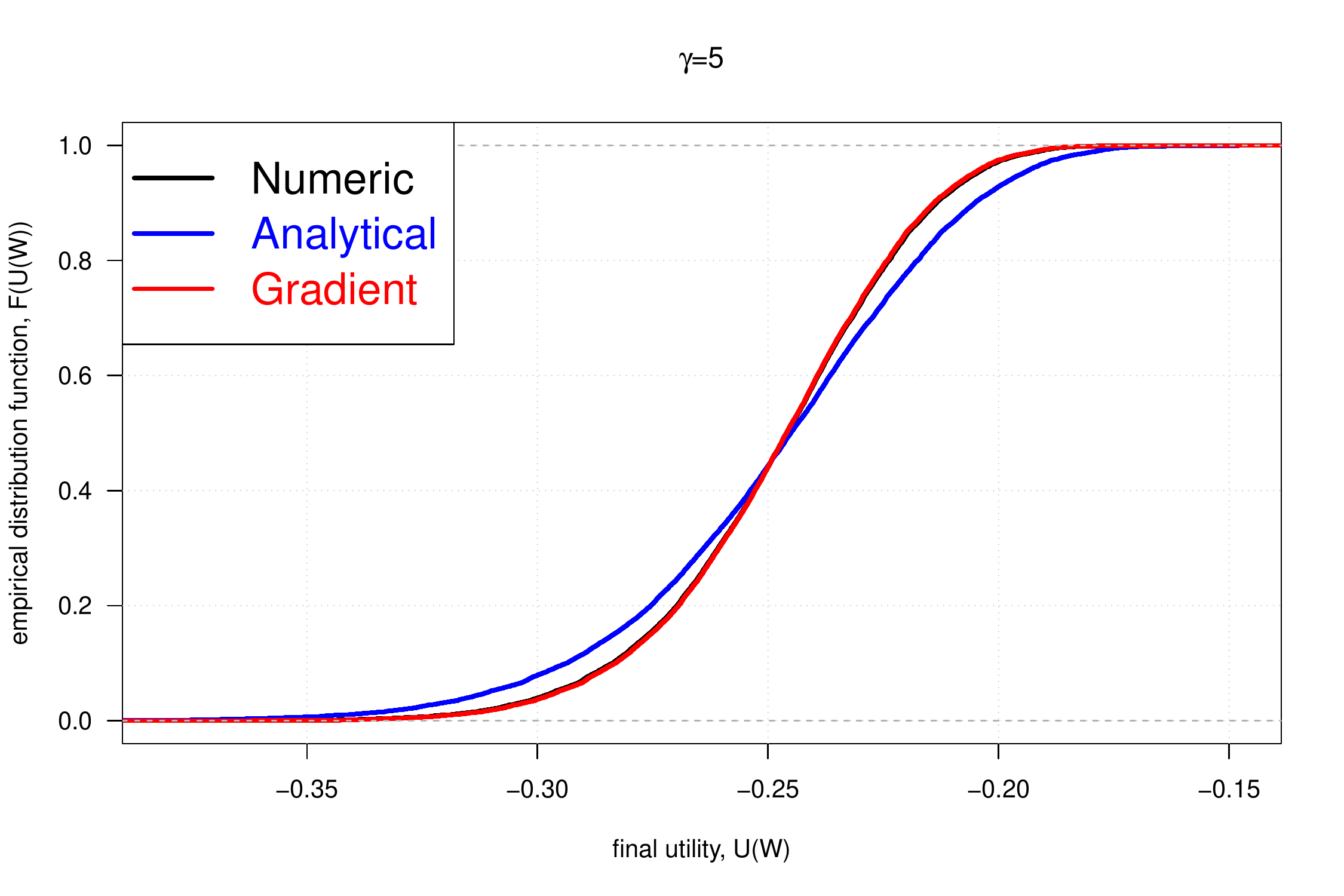}\includegraphics[width=0.5\textwidth]{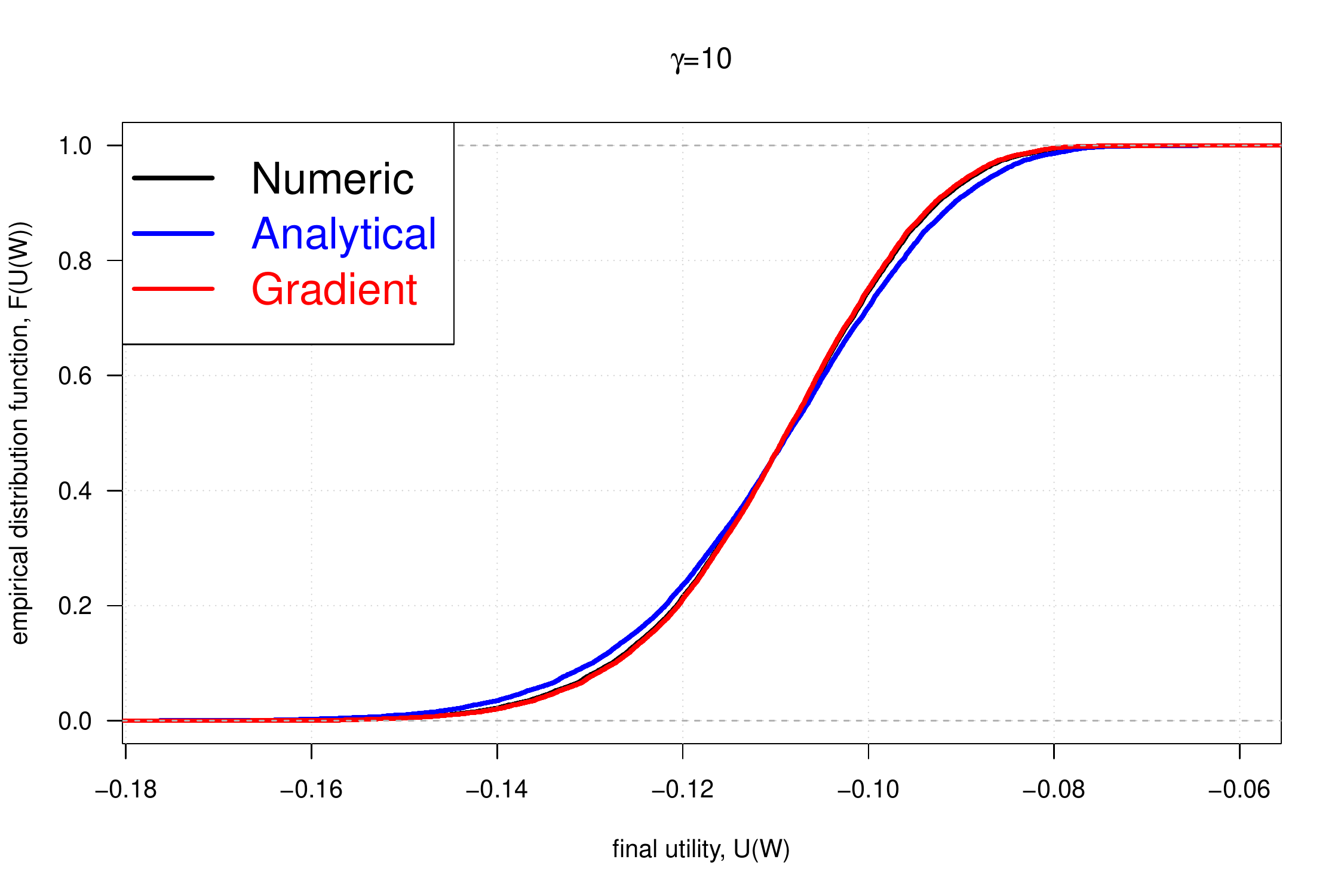}
		\includegraphics[width=0.5\textwidth]{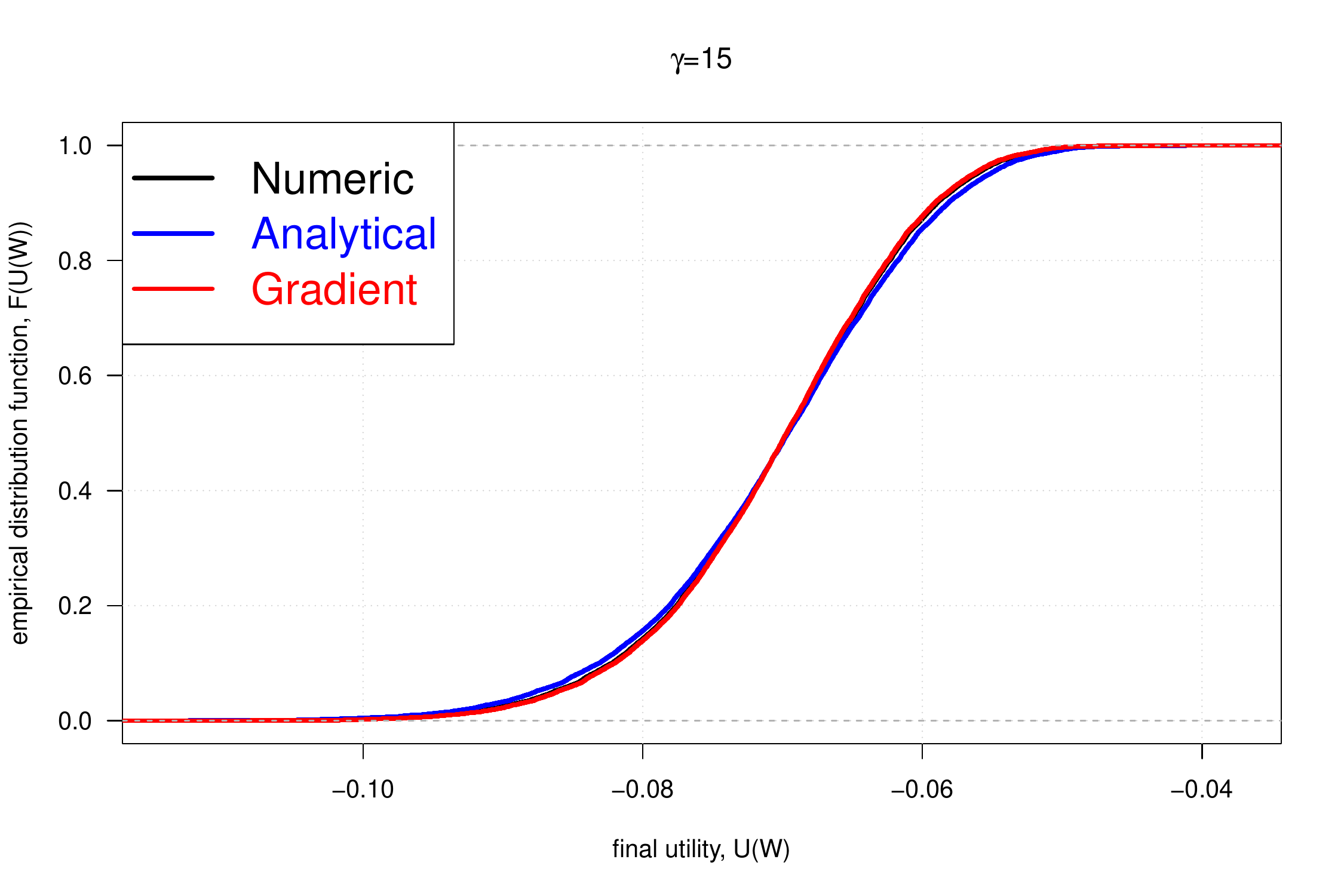}\includegraphics[width=0.5\textwidth]{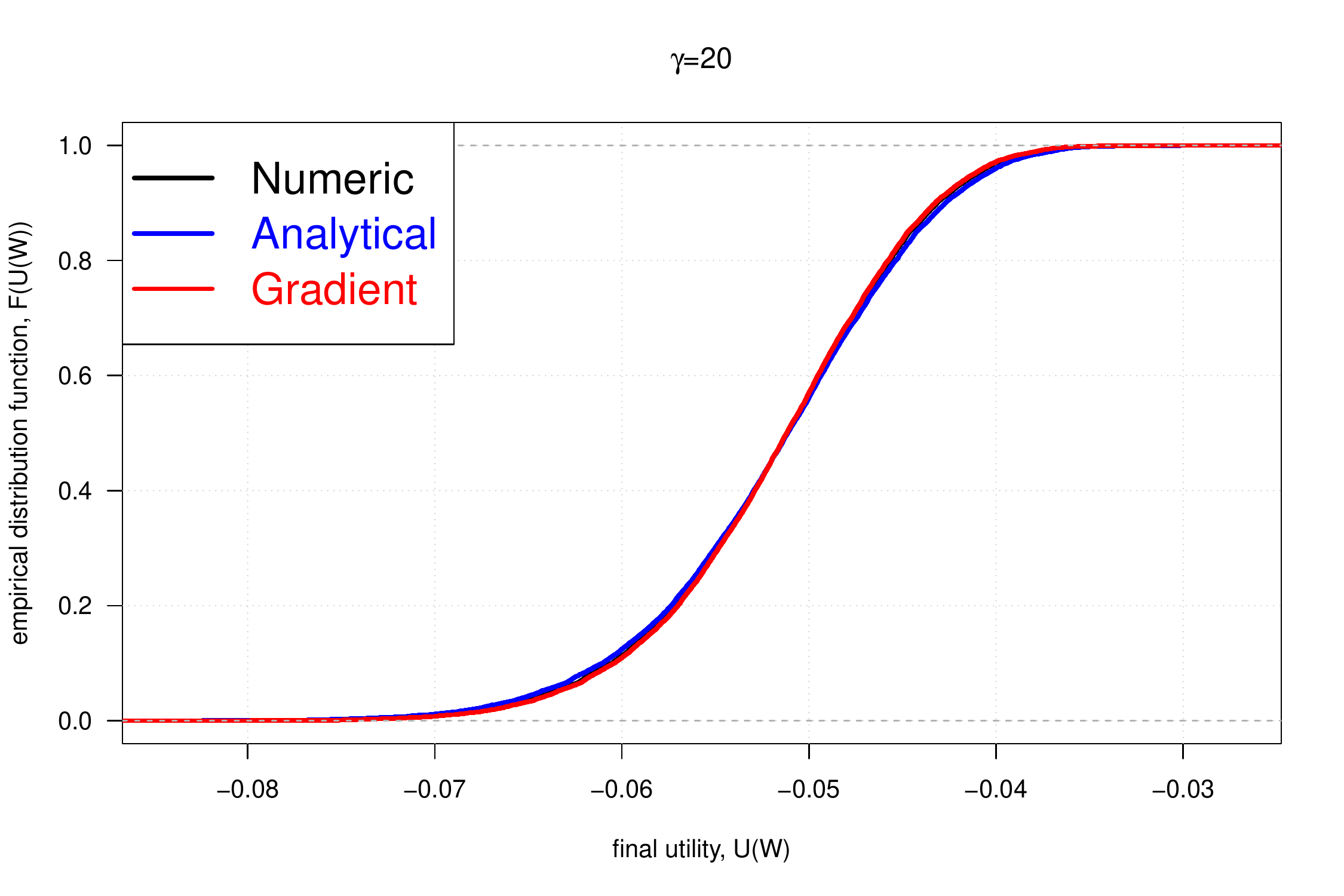}
		\caption{Empirical distribution function of the utility of final wealth for three strategies in case of $\gamma\in \{5, 10, 15, 20\}$}
		\label{fig: ecdf u}
	\end{figure}
	\begin{figure}[]
		\includegraphics[width=\textwidth]{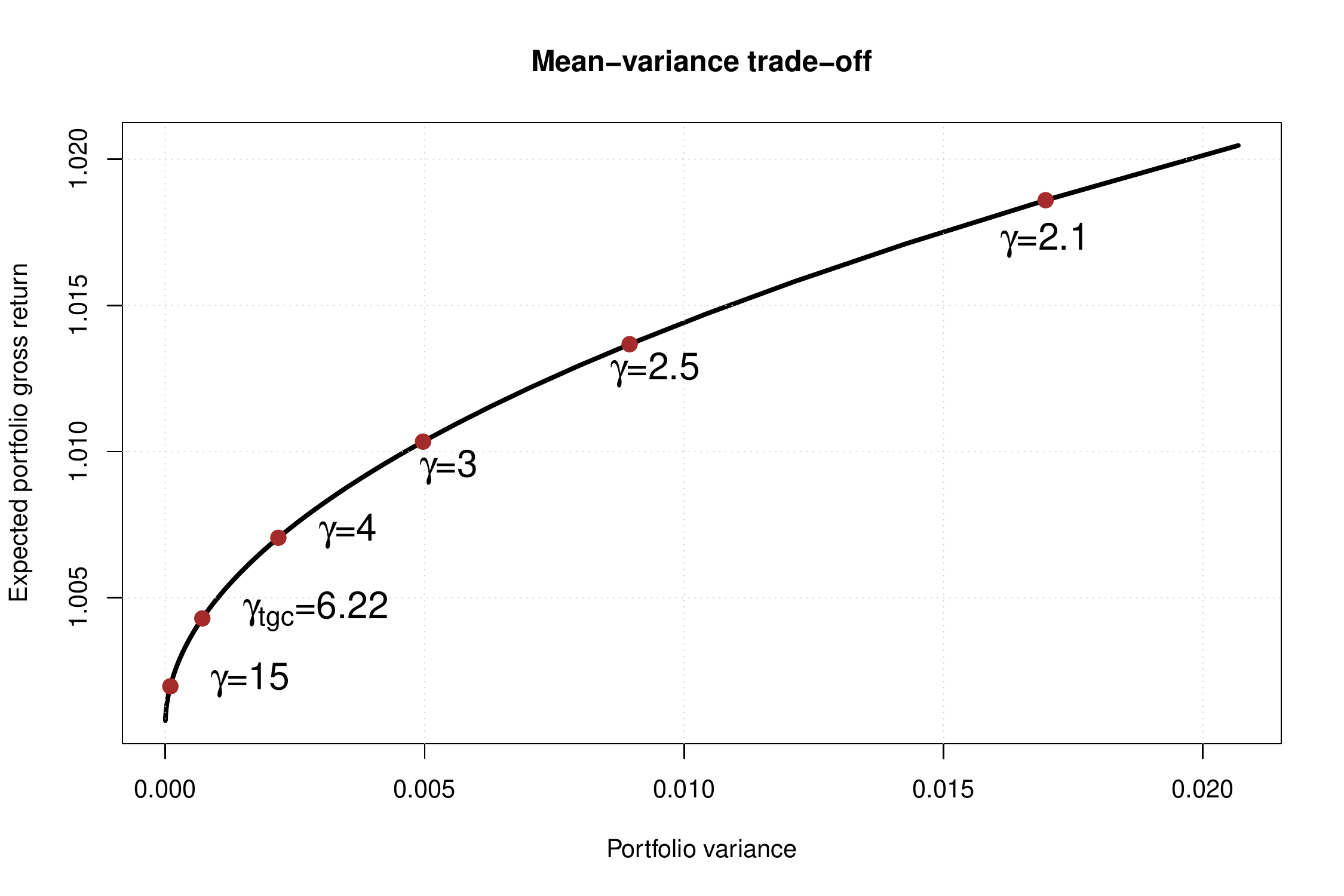}
		\caption{Mean-variance trade-off of the portfolio derived in Theorem \ref{th: optimal portfolio weights theorem}}
		\label{fig: mean-variance}
	\end{figure}

	\begin{figure}[]
	\includegraphics[width=\textwidth]{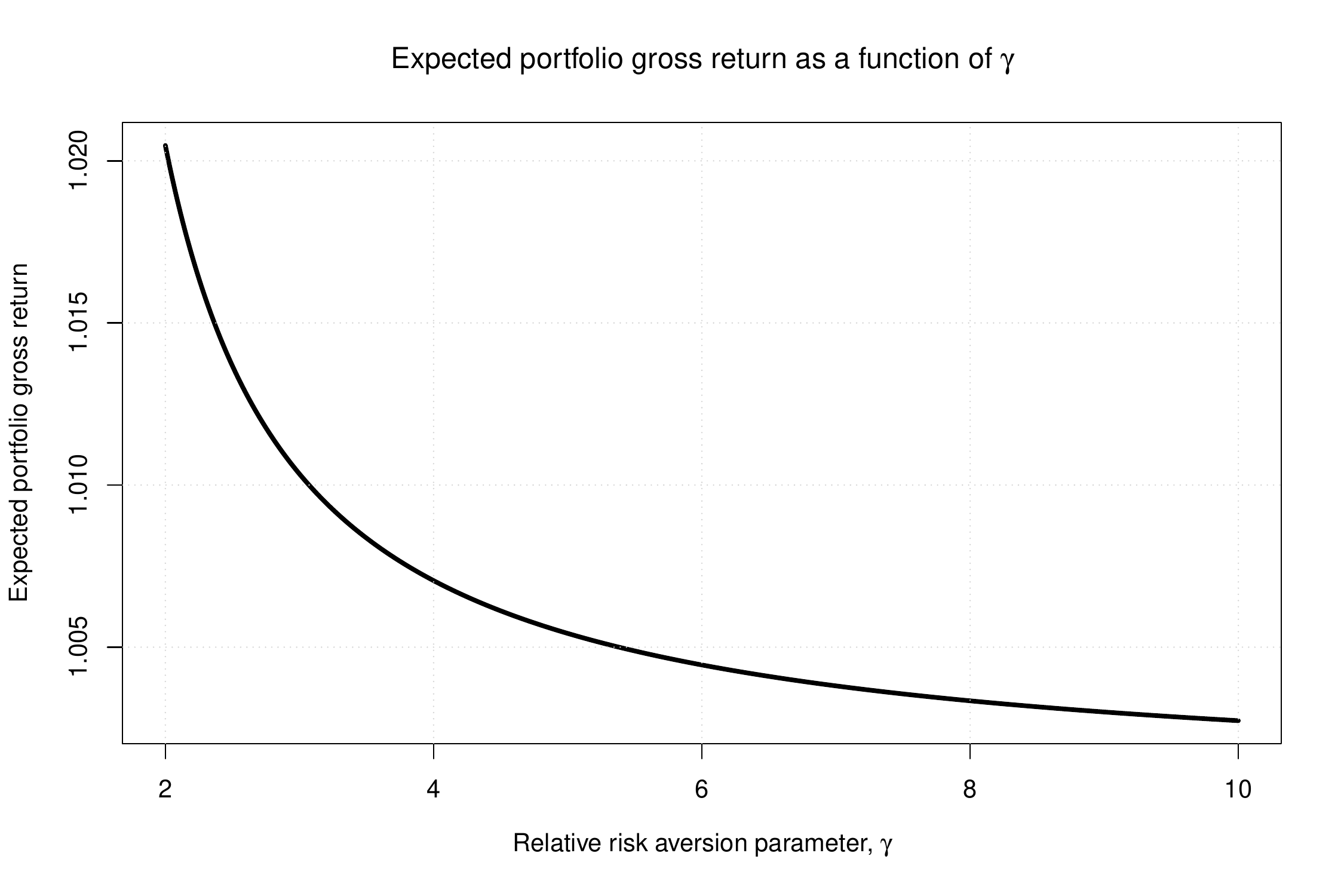}
	\caption{Expected portfolio excess return as a function of $\gamma$}
	\label{fig: g-mean}
	\end{figure}

	\begin{figure}[]
	\includegraphics[width=\textwidth]{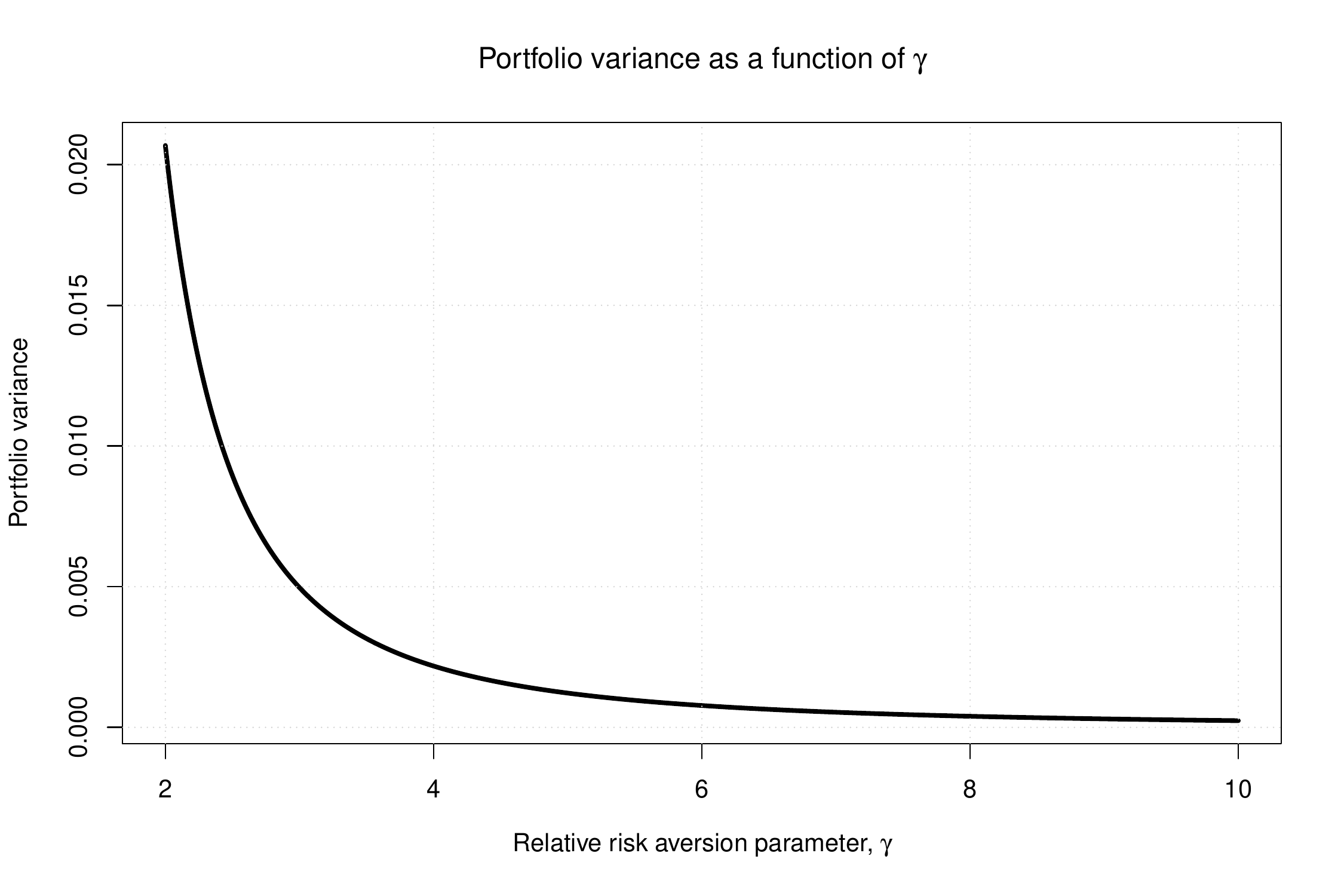}
	\caption{Portfolio variance as a function of $\gamma$}
	\label{fig: g-variance}
	\end{figure}

Table \ref{table: expected power utility} presents the comparison of the strategies. We calculate the sample mean and median of the final utilities for each $\gamma$. In addition, there are sample standard deviation and median absolute deviation in the table. Each column represents three portfolios: analytical, numerical and Gradient Descent.

The first value we are interested in is the sample mean because this is an estimate for the expected value. One can clearly say that the numbers are identical except for $\gamma=5$ and $\gamma=10$ however, the difference is negligible. We observe a similar situation for the median. The values are almost equal, and the difference is minor. But one should point out that for each $\gamma$ analytical solution has a higher median. So one can conclude that the analytical solution slightly dominates but at the same time produces higher variation, which influences the sample mean. That is what we see at sample standard deviation rows. Here the analytical solution has a higher deviation, whereas the Gradient Descent provides the safest portfolio. That also holds for the median absolute deviation. 

To reveal more information from the data, we built the empirical cumulative distribution function for the final wealth and the cumulative distribution function for the utility of wealth. As earlier, for each value $\gamma$, we combine three portfolios. Four plots represent the empirical distribution function of the final wealth In Figure \ref{fig: ecdf w}. One can observe the deviation of the analytical result from two others which approves the higher standard deviation. At the same time, the numerical curve is located directly underneath the one which corresponds to the Gradient Descent solution. But they start to coincide and overlap with an increase of the risk aversion. The identical situation takes place in Figure \ref{fig: ecdf u}, where have the empirical cumulative distribution function for the utility of final wealth.  

As a result, the comparison study shows that the derived analytical solution and the Gradient Descent portfolio provide competitive results compared to the benchmark numerical solution. 

Along with the comparison study, we demonstrate some properties of the weights derived in Theorem \ref{th: optimal portfolio weights theorem} and illustrate the ideas described in Corollary \ref{cor: mean-variance parabola} and Corollary \ref{cor: mean and variance decrease of gama}.
In Figure \ref{fig: mean-variance}, one can find the parabola plot of the expected portfolio gross returns with the corresponding variances. Also, we added several points representing values of the coefficient of relative risk aversion, including the tangency portfolio, derived in Corollary \ref{cor: tangency portfolio}. One can observe that those dots move towards the edge of the parabola as $\gamma$ increases. This behaviour of the portfolio mean and variance being a decrease function of the coefficient of relative risk aversion is also well described in Figure \ref{fig: g-mean} and Figure \ref{fig: g-variance}. 
	
The last part of the simulation study is related to the lower bound of $\gamma$ in Theorem \ref{th: optimal portfolio weights theorem}. Given the parameters $\bmu$ and $\bSigma$, we have $1+4\bmu\T\bSigma\inv\bmu=1.076384$. Meanwhile, the literature provides the median value of relative risk aversion being around seven \citet{barsky1997preference,pennacchi2008theory}. It brings the idea that the lower bound of $1+4\bmu\T\bSigma\inv\bmu$ is not that restrictive from a practical point of view.
	
\section{Summary}
	
	In this paper, we developed two different solutions for the single-period power utility optimization problem. The first one is an approximate result that has analytical formula considering that the asset returns follow a multivariate normal distribution. The resulting portfolio follows a set of properties that provide its efficiency. For instance, the traditional mean-variance trade-off holds for the developed solution - namely: one can earn the higher expected return only by investing in a riskier portfolio. Besides, the expected return and variance of the portfolio are decrease functions of the risk aversion parameter $\gamma$, which is supposed to be true from an economic side \cite[see][]{brandt2009portfolio,pennacchi2008theory,campbell2002strategic}.     
	
	The other solution for optimal weights we derive with the help of the Gradient Descent method. Simply having the historical data allows calculating the vector of optimal weights, which will maximize the expected utility with the predefined precision. It is a universal approach applicable for other utility functions.   
	
	To present the relevance of the derived results in the simulation study, we compare the aforementioned portfolios to the one derived by the Taylor series expansion of the utility function, which is the benchmark numerical solution. We simulate the hypothetical returns from a multivariate normal distribution with the sample mean vector and the variance-covariance matrix estimated from the stock data. 
	Next, for each portfolio and some values of the risk aversion, we calculate the final utility of wealth from the simulated data.
	As a result, all three portfolios have an almost identical sample mean and median for the final utility. That shows the implacability of both solutions derived in the paper. 
	
	\section{Appendix}
		
	\begin{proof}[Proof of Theorem \ref{th: optimal portfolio weights theorem}]
		For the power utility function, one has	
		\begin{equation*}
			E\left[U(W )\right]=E\left[\frac{W_0^{1-\gamma}\left(\Rf+\w\T\bR\right)^{1-\gamma}}{1-\gamma}\right] =  \frac{{W_0^{1-\gamma}}}{1-\gamma} E\left[\left(\Rf+\w\T\bR\right)^{1-\gamma}\right].
		\end{equation*}
		After we consider that $\bR\sim N\left(\bmu,\bSigma\right)$ one can apply the approximation from \citet{bodnar2020mean} and use log-normal distribution instead:
		\begin{equation}\label{eq: objective function}
			E\left[U(W)\right]\approx\frac{{W_0^{1-\gamma}}}{1-\gamma}\exp\left[(1-\gamma)\ln\left(\Rf+\w\T\bmu\right)+\frac{(1-\gamma)^2}{2}\frac{\w\T\bSigma\w}{\left(\Rf+\w\T\bmu\right)^2}\right].
		\end{equation}
		In order to maximize the expected utility \eqref{eq: objective function} with respect to $\w$ it is enough to maximize the next function:
		
		\begin{equation}\label{eq: G}
			G(\w)=\ln\left(\Rf+\w\T{\bmu}\right)+\frac{1-\gamma}{2}\frac{\w\T\bSigma\w}{\left(\Rf+\w\T\bmu\right)^2}.
		\end{equation}
		So, we need to find $\w$ such that 
		\begin{equation}\label{eq: derivative G}
			\frac{\partial G(\w)}{\partial\w}=\textbf{0}.
		\end{equation}

		The partial derivation of $G(\w)$ leads to:
		\begin{align}\label{eq: lagrange derivetive}
			&\frac{\bmu } {\w\T {\bmu + \Rf} }+(1-\gamma)\frac{\bSigma\w\left(\Rf+\w\T\bmu\right)^2-\w\T\bSigma\w(\Rf+\w\T {\bmu} ){\bmu} }{(\Rf+\w\T {\bmu} )^4}=\mathbf{0},\nonumber
			\\\text{or}\nonumber\\
			&\bmu+(1-\gamma)\frac{\bSigma\w\left(\Rf+\w\T\bmu\right)-\w\T\bSigma\w\bmu}{(\Rf+\w\T {\bmu} )^2}=\mathbf{0}.	
		\end{align}
		
		The expression above implies 
		\begin{equation}\label{eq: weights with c}
			\w^*=c\bSigma\inv\bmu,
		\end{equation}
		with $c \in \mathbb{R}$ and consequently we have
		$$\left(1+(1-\gamma)\frac{c(\Rf+cJ)-c^2J}{(\Rf+cJ)^2}\right)\bmu=\textbf{0},$$
		what leads to
	
	\begin{equation}\label{eq: foc equation}
		(\Rf+cJ)^2+(1-\gamma)c\Rf=0,
	\end{equation}
		where
		\begin{equation}\label{J}
			J=\bmu\T\bSigma\inv\bmu.
		\end{equation}
	Let 
	$$D=\frac{(\gamma - 1)^2}{4} - (\gamma - 1)J.$$
		Solving the equation \eqref{eq: foc equation} with respect to $c$ we get

		\begin{equation*}
	c_{\pm}=\Rf\frac{\frac{\gamma - 1}{2} - J \pm \sqrt{D}}{J^2}
		\end{equation*}		
				And we have
				
		$$\w_{+}=\Rf\frac{\frac{\gamma - 1}{2} - J + \sqrt{D}}{J^2}\bSigma\inv\bmu,$$
		$$\w_{-}=\Rf\frac{\frac{\gamma - 1}{2} - J - \sqrt{D}}{J^2}\bSigma\inv\bmu.$$
				
		Thus, we have the solution if $D \ge 0$ or $\gamma \in (0,1)\cup[1+4J,\infty)$. 
		
		Since the solution of first-order conditions \eqref{eq: derivative G} only provide the critical points, it is necessary to discover which one gives a higher value of the objective function \eqref{eq: G}.

\begin{equation}\label{eq: argument difference}
	\begin{split}
		&G(\w_{+})-G(\w_{-})=\ln\frac{\Rf+c_{+}J}{\Rf+c_{-}J}+\frac{1-\gamma}{2}\frac{c^2_{+}J}{(\gamma-1)c_{+}\Rf}-\frac{1-\gamma}{2}\frac{c_{-}^2J}{(\gamma-1)c_{-}\Rf}\\
		&=\ln\frac{\gamma-1+2\sqrt{\D}}{\gamma-1-2\sqrt{\D}}-\frac{c_{+}J}{2\Rf}+\frac{c_{-}J}{2\Rf}=\ln\frac{\gamma-1+2\sqrt{\D}}{\gamma-1-2\sqrt{\D}}-\frac{\sqrt{\D}}{J}\\
	\end{split}
\end{equation}
Next we show that the expression \eqref{eq: argument difference} is a decrease function of $\gamma$ for $\gamma>1+4J$. Besides, it is equal to zero for $\gamma=1+4J$. Together, this implies a positive \eqref{eq: argument difference} for $\gamma>1+4J$. So let us check it:			
\begin{equation}
	\begin{split}
		&\frac{\partial}{\partial\gamma}\left[\ln\frac{\gamma-1+2\sqrt{\D}}{\gamma-1-2\sqrt{\D}}-\frac{\sqrt{\D}}{J}\right] = \frac{\partial}{\partial\gamma}\left[G(\w_{+})-G(\w_{-})\right]\\
		&=\frac{\gamma-1-2\sqrt{\D}}{\gamma-1+2\sqrt{\D}}\frac{\left(1+2\frac{\partial \sqrt{D}}{\partial\gamma}\right)\left(\gamma-1-2\sqrt{\D}\right)-\left(1-2\frac{\partial \sqrt{D}}{\partial\gamma}\right)\left(\gamma-1+2\sqrt{\D}\right)}{\left(\gamma-1-2\sqrt{\D}\right)^2}-\frac{\frac{\partial \sqrt{D}}{\partial\gamma}}{J}\\
		&=\frac{4\frac{\partial \sqrt{D}}{\partial\gamma}(\gamma-1)-4\sqrt{\D}}{4(\gamma-1)J}-\frac{\frac{\partial \sqrt{D}}{\partial\gamma}}{J}=\frac{\sqrt{\D}}{J(1-\gamma)}<0,\ \text{if}\ \gamma>1+4J .\\
	\end{split}
\end{equation}
Meanwhile
\begin{equation}
	\begin{split}
		&\left[G(\w_{+})-G(\w_{-})\right]_{\gamma=1+4J}=\ln\frac{4J}{4J}=0.
	\end{split}
\end{equation}
Thus, the expression \eqref{eq: argument difference} is a decrease function of $\gamma$ and negative for $\gamma>1+4J$. 
Hence, fro all $\gamma>1+4J$ the maximum of \eqref{eq: objective function} is attained at 

$$\w=\Rf\frac{\frac{\gamma - 1}{2} - J - \sqrt{D}}{J^2}\bSigma\inv\bmu.$$

		It is notable that $\Rf+\w\T\bmu$ must be positive, but for $0<\gamma<1$, $\Rf+\w_{-}\T\bmu$ becomes negative and, consequently, $\w_{+}$ is an only candidate for an extrema point. In order to investigate the type of the extrema we calculate the second derivative of the objective function and analyse its sign:
		
		\begin{equation}
		\begin{split}
			&\frac{\partial^2}{\partial\w^2}\left[\ln\left(\Rf+\w\T{\bmu}\right)+\frac{1-\gamma}{2}\frac{\w\T\bSigma\w}{\left(\Rf+\w\T\bmu\right)^2}\right]\\
			&\overset{\eqref{eq: lagrange derivetive}}=\frac{\partial}{\partial\w\T}\left[\frac{\bmu } {\w\T {\bmu + \Rf} }+(1-\gamma)\frac{\bSigma\w\left(\Rf+\w\T\bmu\right)^2-\w\T\bSigma\w(\Rf+\w\T {\bmu} ){\bmu} }{(\Rf+\w\T {\bmu} )^4}\right]\\
			&=\frac{\partial}{\partial\w\T}\left[\frac{\bmu } {\w\T {\bmu + \Rf} } + (1-\gamma)\frac{\bSigma\w}{\left(\Rf+\w\T\bmu\right)^2}-(1-\gamma)\frac{\w\T\bSigma\w\bmu}{(\Rf+\w\T {\bmu} )^3}\right]\\
			&=-\frac{\bmu\bmu\T}{\left(\Rf+\w\T\bmu\right)^2}+(1-\gamma)\frac{\bSigma\left(\Rf+\w\T\bmu\right)^2-2\bSigma\w\bmu\T\left(\Rf+\w\T\bmu\right)}{\left(\Rf+\w\T\bmu\right)^4}\\
			&- (1-\gamma)\frac{2\bmu\w\T\bSigma(\Rf+\w\T {\bmu} )^3-3(\Rf+\w\T {\bmu} )^2\w\T\bSigma\w\bmu\bmu\T}{(\Rf+\w\T {\bmu} )^6}\\
			&\overset{\eqref{eq: weights with c}}=-\frac{\bmu\bmu\T}{\left(\Rf+cJ\right)^2}+(1-\gamma)\frac{\bSigma\left(\Rf+cJ\right)^2-2c\bmu\bmu\T\left(\Rf+cJ\right)}{\left(\Rf+cJ\right)^4}\\
			&- (1-\gamma)\frac{2c\bmu\bmu\T(\Rf+cJ )^3-3(\Rf+cJ )^2c^2J\bmu\bmu\T}{(\Rf+cJ )^6}\\
			&=-\frac{\bmu\bmu\T}{\left(\Rf+cJ\right)^2}+(1-\gamma)\frac{\bSigma}{\left(\Rf+cJ\right)^2}-(1-\gamma)\frac{4c\bmu\bmu\T}{\left(\Rf+cJ\right)^3}+ (1-\gamma)\frac{3c^2J\bmu\bmu\T}{(\Rf+cJ )^4}\\
			&=(1-\gamma)\frac{\bSigma}{\left(\Rf+cJ\right)^2} + \frac{\bmu\bmu\T}{(\Rf+cJ)^4}\left(-{\left(\Rf+cJ\right)^2}-(1-\gamma){4c}{\left(\Rf+cJ\right)}+ (1-\gamma){3c^2J}\right)\\	
			&=(1-\gamma)\frac{\bSigma}{\left(\Rf+cJ\right)^2} + \frac{\bmu\bmu\T}{(\Rf+cJ)^4}\left((1-\gamma)c\Rf-(1-\gamma){4c}{\left(\Rf+cJ\right)}+ (1-\gamma){3c^2J}\right)\\
			&=(1-\gamma)\frac{\bSigma}{\left(\Rf+cJ\right)^2} + (1-\gamma)c\frac{\bmu\bmu\T}{(\Rf+cJ)^4}\left(-3\Rf-  {cJ}\right)\\
			&\overset{\eqref{eq: foc equation}}=(1-\gamma)\frac{\bSigma}{\left(\Rf+cJ\right)^2} + \frac{\bmu\bmu\T}{\Rf(\Rf+cJ)^2}\left(3\Rf +  {cJ}\right)\\			
	\end{split}
	\end{equation}
		Thus, the second derivative is positive for $\gamma\in\left(0,1\right)$ and in this case $\w_{+}$ provide a local minima of the function $G(\w)$.

		Consequently the	optimal portfolio wights can be calculated for $\gamma\geq 1+4\bmu\T\bSigma\inv\bmu$ with the formula
		\begin{equation}\label{eq: single period approximate optimal weights}
			\w^*=\Rf\frac{\frac{\gamma - 1}{2} - \bmu\T\bSigma\inv\bmu - \sqrt{\frac{(\gamma - 1)^2}{4} - (\gamma - 1) \bmu\T\bSigma\inv\bmu}}{(\bmu\T\bSigma\inv\bmu)^2}\bSigma\inv\bmu.
		\end{equation} 
	\end{proof}
	
	\begin{proof}[Proof of the Corollary \ref{cor: mean-variance parabola}]
	From the proof of Theorem \ref{th: optimal portfolio weights theorem} we have:
	
	 $\w=c\bSigma\inv\bmu \Rightarrow \w\T\bmu = cJ\ \text{and}\  \w\T\bSigma\w=c^2J\Rightarrow (\w\T\bmu)^2=\w\T\bSigma\w J$.
	\end{proof}

	\begin{proof}[Proof of the Corollary \ref{cor: mean and variance decrease of gama}]
		\begin{equation*}
			\begin{split}
				\frac{\partial}{\partial\gamma}(\Rf + \w^{*\prime}\bmu)&=\frac{\partial}{\partial\gamma}\left[\Rf\frac{\gamma-1  - \sqrt{(\gamma-1)^2-4(\gamma-1)J}}{2J}\right]\\
				&=\frac{\Rf}{2J}\left(1-\frac{2(\gamma-1)-4J}{2\sqrt{(\gamma-1)^2-4(\gamma-1)J}}\right)\\
				&=\frac{\Rf}{2J}\frac{\sqrt{(\gamma-1)^2-4(\gamma-1)J}- (\gamma-1)+2J}{\sqrt{(\gamma-1)^2-4(\gamma-1)J}}\\
				&=\frac{\Rf}{2J\sqrt{\D}}\frac{{(\gamma-1)^2-4(\gamma-1)J}- (\gamma-1-2J)^2}{\sqrt{(\gamma-1)^2-4(\gamma-1)J}+ \gamma-1-2J}\\
				&=\frac{\Rf}{2J\sqrt{\D}}\frac{ - 4J^2}{\sqrt{(\gamma-1)^2-4(\gamma-1)J}+ \gamma-1-2J}\overset{\gamma\geq 1+4J}<0.
			\end{split}
		\end{equation*}
	Taking into account that $\boldsymbol{\omega}^{*\prime}\bmu + \Rf$ is positive for $\gamma>1+4J$:
			\begin{equation*}
		\begin{split}
			\Rf+\boldsymbol{\omega}^{*\prime}\bmu&=\Rf\frac{\gamma-1  - \sqrt{(\gamma-1)^2-4(\gamma-1)J}}{2J}-\Rf\\
			&=\Rf\frac{\gamma-1-2J-\sqrt{(\gamma-1)^2-4(\gamma-1)J}}{2J}\\
			&=\frac{\Rf}{2J}\frac{(\gamma-1-2J)^2-(\gamma-1)^2+4(\gamma-1)J}{\gamma-1-2J+\sqrt{(\gamma-1)^2-4(\gamma-1)J}}\\
			&=\frac{\Rf}{2J}\frac{ 4J^2}{\gamma-1-2J+\sqrt{(\gamma-1)^2-4(\gamma-1)J}}\overset{\gamma\geq1+4J}>0,
		\end{split}
	\end{equation*}
we have that portfolio variance also decreases on $\gamma$ as a square of $\boldsymbol{\omega}^{*\prime}\bmu$.
	\end{proof}
	
		\begin{proof}[Proof of the Corollary \ref{cor: tangency portfolio}]
If all the wealth is split between risky assets, then sum of weights $\w$ is equal to one. Consequently, from equation \eqref{eq: weights with c} we receive $1=\1\T\w=c\1\bSigma\inv\bmu$ and $c=\frac{1}{\1\T\bSigma\inv\bmu}$. 
One can also find the corresponding value of the risk aversion $\gamma$ form equation \eqref{eq: foc equation}:
$$\left(\Rf+\frac{J}{\1\T\bSigma\inv\bmu}\right)^2+(1-\gamma)\frac{\Rf}{\1\T\bSigma\inv\bmu}=0 \Rightarrow \gamma_{tgc}=\left(\Rf+\frac{J}{\1\T\bSigma\inv\bmu}\right)^2\frac{\1\T\bSigma\inv\bmu}{\Rf}+1.$$
	
It is also notable that if $\1\T\bSigma\inv\bmu$ is positive then because of $(a+b)^2\geq4ab$ it holds:

$$\gamma_{tgc}=\left(\Rf+\frac{J}{\1\T\bSigma\inv\bmu}\right)^2\frac{\1\T\bSigma\inv\bmu}{\Rf}+1\geq 4\Rf\frac{J}{\1\T\bSigma\inv\bmu}\frac{\1\T\bSigma\inv\bmu}{\Rf}+1= 1 + 4J$$
\end{proof}

	\bibliography{2}{}
	
\end{document}